    \renewcommand*{\bm}[1]{#1}%
\tikzstyle{every picture}+=[remember picture]
\tikzstyle{na} = [baseline=-.5ex]
\newcommand{\vast}{\bBigg@{1}}
\newcommand{\Vast}{\bBigg@{5}}
\numberwithin{equation}{section}
\newcommand{\ie}{\textit{i.e.}}
\numberwithin{equation}{section}
\newcommand{\nn}{\nonumber}
\newcommand{\mat}[1]{\begin{pmatrix} #1 \end{pmatrix}}
\newcommand{\be}{\begin{equation}} \newcommand{\ee}{\end{equation}}
\newcommand{\bea}{\begin{equation} \begin{aligned}} \newcommand{\eea}{\end{aligned} \end{equation}}
\def\U{\mathrm{U}}
\newcommand{\wb}{\overline}
\newcommand{\smallstrut}{\rule{0pt}{.6em}}
\DeclareMathOperator{\Tr}{Tr}
\DeclareMathOperator{\sign}{sign}
\DeclareMathOperator{\re}{\mathbb{R}e}
\DeclareMathOperator{\Li}{Li}
\newcommand{\cC}{\mathcal{C}}
\newcommand{\cG}{\mathcal{G}}
\newcommand{\cI}{\mathcal{I}}
\newcommand{\cN}{\mathcal{N}}
\newcommand{\cO}{\mathcal{O}}
\newcommand{\cV}{\mathcal{V}}
\newcommand{\bB}{\mathbb{B}}
\newcommand{\bZ}{\mathbb{Z}}
\newcommand{\fh}{\mathfrak{h}}
\newcommand{\fm}{\mathfrak{m}}
\newcommand{\fn}{\mathfrak{n}}
\newcommand{\fp}{\mathfrak{p}}
\newcommand{\fR}{\mathfrak{R}}
\newcommand{\ft}{\mathfrak{t}}
\DeclareMathAlphabet{\mathcalligra}{T1}{calligra}{m}{n}
\newtheorem{theorem}{Theorem}
\title{ Large $N$ matrix models for 3d  ${\cal N}=2$ theories: twisted index, free energy and black holes 
}
\author{Seyed Morteza Hosseini}
\author{and Alberto Zaffaroni}
\affiliation{Dipartimento di Fisica, Universit\`a di Milano-Bicocca, I-20126 Milano, Italy}
\affiliation{INFN, sezione di Milano-Bicocca, I-20126 Milano, Italy}
\emailAdd{Morteza.Hosseini@mib.infn.it}
\emailAdd{Alberto.Zaffaroni@mib.infn.it}
\abstract{We provide general formulae for the topologically twisted index  of a general three-dimensional  $\cN \geq 2$ gauge theory
with an M-theory or massive type IIA dual in the large $N$ limit. The index is defined as the supersymmetric path integral of the theory on $S^2\times S^1$ in the presence of background magnetic fluxes for the R- and global symmetries and it is conjectured  to reproduce the entropy of magnetically charged static BPS AdS$_4$ black holes.  For a class of theories with an M-theory dual,  we show that the logarithm of the index scales indeed as $N^{3/2}$ (and $N^{5/3}$ in the massive type IIA case). We find an intriguing relation with the  (apparently unrelated)  large $N$ limit of the partition function on $S^3$. We also provide a universal formula for extracting the index from the large $N$ partition function on $S^3$ and its derivatives and point out its analogy with the attractor mechanism for AdS black holes.  
}
\begin{document}

\setcounter{tocdepth}{2}
\maketitle

%
%

\date{Dated: \today}




\section{Introduction}

In this paper we study the large $N$ behavior of the topologically twisted index introduced in \cite{Benini:2015noa}  for  three-dimensional  $\cN \geq 2$ gauge theories.
It is defined as the partition function of the theory on $S^2 \times S^1$ with a topological twist
 along $S^2$ \cite{Witten:1988hf,Witten:1991zz} and it is  a function of magnetic charges and chemical potentials for the flavor symmetries. 
The large $N$ limit of the index for the ABJM theory was successfully used in  \cite{Benini:2015eyy} to provide the first microscopic counting of the microstates
of an AdS$_4$ black hole. Here we extend the analysis of  \cite{Benini:2015eyy} to a larger class of $\cN \geq 2$ theories with an M-theory or massive type IIA dual, containing bi-fundamental,
adjoint and (anti-)fundamental chiral matter.  Most of the theories proposed in the literature  are obtained by adding Chen-Simons terms \cite{Aharony:2008ug,Hanany:2008cd,Hanany:2008fj,Martelli:2008si,Franco:2008um,Davey:2009sr,Hanany:2009vx,Franco:2009sp}  or by flavoring  \cite{Gaiotto:2009tk,Jafferis:2009th,Benini:2009qs}  four-dimensional quivers  describing D3-branes probing CY$_3$ singularities. We refer to these theories as having a four-dimensional parent. They all have an M-theory phase where the index is expected to scale as $N^{3/2}$.  The main motivation for studying the  large $N$ limit of the index for these theories comes indeed form the attempt to extend the result of \cite{Benini:2015eyy} to a larger class of black holes, and we hope to report on the subject soon. However,  the matrix model computing the index reveals an interesting structure at large $N$ which deserves attention by itself. In particular, we will point out analogies and relations with other matrix models appeared in the
literature on three-dimensional  $\cN \geq 2$ gauge theories.



The index can be evaluated using supersymmetric  localization and it reduces to a matrix model. It can be written as the contour integral 
\begin{equation}  Z (\fn, y) = \frac1{|W|} \; \sum_{\fm \,\in\, \Gamma_\fh} \; \oint_\cC Z_{\text{int}} (\fm, x;  \fn , y) \end{equation}
of a meromorphic differential form in variables $x$ parameterizing the Cartan subgroup and subalgebra of the gauge group, summed over
the lattice of magnetic charges $\fm$ of the group. The index depends on  complex fugacities $y$ and magnetic charges $\fn$ for the flavor symmetries.
As a difference with other well known matrix models arising from supersymmetric localization in three dimensions, like the partition function on $S^3$  \cite{Kapustin:2009kz,Jafferis:2010un,Hama:2010av} or the superconformal index \cite{Kim:2009wb}, in the large $N$ limit all the gauge magnetic fluxes contribute to the
integral making difficult its evaluation. Here we use the strategy employed in  \cite{Benini:2015eyy} to explicitly resum the integrand and consider
the contour integral of the sum
\begin{equation} Z_{\text{resummed}}  ( x;  \fn , y) = \frac1{|W|} \; \sum_{\fm \,\in\, \Gamma_\fh}  Z_{\text{int}} (\fm, x;  \fn , y) \end{equation}
which is a complicated rational function of $x$. We set up, as in  \cite{Benini:2015eyy}, an auxiliary large $N$ problem devoted to find
the positions  of the  poles of $Z_{\text{resummed}}$ in the plane $x$. We write a set of algebraic equations for the position of the poles,
which we call {\it Bethe Ansatz Equations} (BAEs), and we write a {\it Bethe functional} whose derivative reproduces the  BAEs.
The method for solving the BAEs is similar to that used in \cite{Herzog:2010hf,Jafferis:2011zi} for the large $N$ limit of the partition function on $S^3$
in the M-theory limit and the one used for the partition function on $S^5$ of five-dimensional theories \cite{Jafferis:2012iv,Minahan:2013jwa,Minahan:2014hwa}. We take
an ansatz for the eigenvalues where the imaginary parts grow in the large $N$ limit  as some power of $N$. The solution of the BAEs in the large $N$ limit is then used to evaluate $Z (\fn, y)$ using the residue theorem.  
In this last step we need to take into account (exponentially small) corrections to the large $N$ limit of the BAEs which contribute to the index due
to the singular logarithmic behavior of its integrand.

We focus on the limit where $N$ is much greater than the Chern-Simons couplings $k_a$. For the class of  quivers we are considering, this limit corresponds  to an M-theory description when  $\sum_a k_a=0$ and a massive type IIA one when $\sum_a k_a\neq 0$. We recover the known scalings $N^{3/2}$ and $N^{5/3}$ 
for the M-theory and massive type IIA phase, respectively. Similarly to   \cite{Jafferis:2011zi}, we find that, in order
to have a consistent $N^{3/2}$ scaling of the index in the M-theory phase, we need to impose some constraints on the quiver. In particular, quivers with a chiral 4d parent
are not allowed, as in  \cite{Jafferis:2011zi}. They are instead allowed in the massive type IIA phase.

In the course of our analysis, we find a number of  interesting general results.

First, we find a simple universal  formula for computing the index from the Bethe potential, $\wb{\mathcal{V}}(\Delta_I)$,  as a function of the chemical potentials,
\begin{equation}\label{Z large N conjecture0}
 {\quad \rule[-1.4em]{0pt}{3.4em}
 \re\log Z = - \frac{2}{\pi} \, \wb{\mathcal{V}}(\Delta_I) \,
 - \sum_{I}\, \left[ \left(\fn_I - \frac{\Delta_I}{\pi}\right) \frac{\partial \wb{\mathcal{V}}(\Delta_I)}{\partial \Delta_I}
  \right] \, .
 \quad}
\end{equation}
We call this the {\it index theorem}.
It allows to avoid the many technicalities involved in taking the residues and including exponentially small corrections to the index. 
By comparing the index theorem with the attractor formula for the entropy of asymptotically AdS$_4$ black holes,  we are also led to conjecture a relation between
the Bethe potential and the prepotential of the dimensionally truncated  gauged supergravity describing the compactification on AdS$_4\times Y_7$, with $Y_7$  a Sasaki-Einstein manifold. 
This relation is discussed in Section \ref{discussion}.
 
 Secondly,  we find an explicit relation between the Bethe potential and the $S^3$ free energy of the same $\cN \geq 2$ gauge theory.
 Although the two matrix models are quite different at finite $N$,  the BAEs and the functional form of the Bethe potential
 in the large $N$ limit are {\it identical} to the matrix model equations of motion and  free energy functional for the path integral on $S^3$ found in   \cite{Jafferis:2011zi}. 
 This result implies that the index can be  extracted from the free energy on $S^3$ and its derivatives in the large $N$ limit. It also implies a relation with the volume
 functional of (Sasakian deformations of) the internal manifold $Y_7$.  
 These relations deserve a better understanding.
 
 In this paper we give the general rules for constructing the Bethe potential and the index for a generic Yang-Mills-Chen-Simons theory with bi-fundamental, adjoint and fundamental fields and few explicit examples of their application. Many other examples can be found in an upcoming paper by one of the authors \cite{Hosseini:2016ume},
including models for well-known homogeneous Sasaki-Einstein manifolds, $N^{0,1,0},Q^{1,1,1},V^{5,2}$,  and various nontrivial checks of dualities.  We leave the most interesting part
of the story concerning applications to the microscopic counting for AdS$_4$ black holes for the future. 

The paper is organized as follows. In Section \ref{index} we review the definition of the topologically twisted index and the strategy for determining its large $N$ limit used 
in \cite{Benini:2015eyy}. In Section \ref{proof vanishing forces} we give the general rules for constructing the Bethe potential and the index for a generic Yang-Mills-Chen-Simons theory with bi-fundamentals, adjoint and fundamental fields with $N^{3/2}$ scaling. In Section \ref{sec:freeenergy} we prove the identity of
the Bethe potential and the $S^3$ free energy at large $N$. In Section \ref{sec:index} we derive the index theorem that allows to express the index at large $N$  in terms of
the Bethe potential and its derivatives. In Section \ref{N53} we discuss the rules  for a $N^{5/3}$ scaling.  In Section \ref{discussion} we give a discussion of some open issues and point out analogies with the attractor mechanism for AdS black holes. Appendices
\ref{general rules N32} and \ref{general rules N53} contains the explicit derivations of the rules for $N^{3/2}$ and $N^{5/3}$ scalings, respectively. Appendix \ref{SPP} contains
an explicit example, based on the quiver for the Suspended Pinch Point.

\section{The  topologically twisted index}
\label{index}
 The topologically twisted index  of an $\cN \geq 2$ gauge theory in three dimensions is defined as the partition function on $S^2 \times S^1$ with a topological twist
 along $S^2$ \cite{Benini:2015noa}. It depends on a choice of fugacities $y$ for the global symmetries and magnetic charges $\fn$ on $S^2$ parameterizing the twist. The index can be computed using localization  and it is given by a matrix integral over the zero-mode gauge variables and it is summed over a lattice of gauge magnetic charges on $S^2$.
Explicitly, for a theory with gauge group $G$ of rank $r$ and a set of chiral multiplets  transforming in representations $\fR_I$ of $G$, the index is given by \cite{Benini:2015noa}
\be
\label{path}
Z (\fn, y) = \frac1{|W|} \; \sum_{\fm \,\in\, \Gamma_\fh} \; \oint_\cC \;   \prod_{\text{Cartan}} \left (\frac{dx}{2\pi i x}  x^{k \fm} \right ) \prod_{\alpha \in G} (1-x^\alpha) \,  \prod_I \prod_{\rho_I \in \fR_I} \bigg( \frac{x^{\rho_I/2} \, y_I^{1/2}}{1-x^{\rho_I} \, y_I} \bigg)^{\rho_I(\fm)- \fn_I  +1}  \, ,
\ee
where $\alpha$ are the roots of $G$ and $\rho_I$ are the  weights of the representation $\fR_I$.
In this formula\footnote{$\beta$ is the radius of $S^1$.}, $x=e^{i(A_t + i\beta \sigma)}$ parameterizes the gauge zero modes, where $A_t$ is a Wilson line on $S^1$ and runs over the maximal torus of $G$ while $\sigma$ is the real scalar in  the vector multiplet and runs over the corresponding Cartan subalgebra. $\fm$ are  gauge magnetic fluxes  living in the co-root lattice $\Gamma_\fh$ of $G$ (up to gauge transformations).  The index is integrated over $x$ and summed over $\fm$.  $k$ is the Chern-Simons coupling for the group $G$, and there can be a different one for
each Abelian and simple factor in $G$. Supersymmetric localization selects a particular contour of integration and the final result can be formulated in terms of the Jeffrey-Kirwan residue \cite{Benini:2015noa}.

The index depends on a choice of fugacities $y_I$  for the flavor group and a choice 
of integer magnetic charges  $\fn_I$  for the R-symmetry of the theory. In an $\cN \geq 2$ theory, the R-symmetry can mix with the global symmetries
and we can also write
\be
\fn_I = q_I + \fp_I \, ,
\ee
where $q_I$ is a reference R-symmetry and $\fp_I$ magnetic charges under the flavor symmetries of the theory. Both $y_I$ and $\fn_I$ are thus parameterized by the global symmetries of the theory. Each monomial term $W$ in the superpotential imposes a constraint
\be
\prod_{I\in W} y_{I} =1 \, , \qquad\qquad \sum_{I\in W} \fn_I = 2 \, ,
\ee   
where the product and sum are restricted to the fields entering in $W$. Each Abelian gauge group in three dimensions is associated with a topological $\U(1)$ symmetry.
The contribution of a  topological symmetry with fugacity $\xi =e^{i z}$ and magnetic flux $\ft$ to the index is given by
\be
\label{topological}
Z^\text{top} = x^\ft \, \xi^\fm \, ,
\ee
where $x$ is the gauge variable of the corresponding $\U(1)$ gauge field.

In this paper we are interested in  the large $N$ limit of the topologically twisted index  for theories with unitary gauge groups and matter transforming in the fundamentals, bi-fundamentals and adjoint representation. As in \cite{Benini:2015eyy},  we evaluate the matrix model in two steps.
We first perform the summation over magnetic fluxes introducing a large cut-off $M$.\footnote{According to the rules in  \cite{Benini:2015noa}, the residues to take in \eqref{path} depend on the
sign of the Chern-Simons couplings. We can choose a set of co-vectors in the Jeffrey-Kirwan prescription such that the contribution comes from  residues with $\fm_a\leq 0$ for $k_a > 0$, residues with $\fm_a\geq 0$ for $k_a < 0$  and residues in the origin. We can then take a large positive integer $M$ and perform the summations in Eq.\,\eqref{path}, with $\fm_a \leq M-1 \, (k_a > 0)$ and $\fm_a \geq 1-M \, (k_a < 0)$.}
The result of this summation produces terms in the integrand of the form 
\begin{equation}
 \prod_{i=1}^{N} \frac{\left(e^{i B_i^{(a)}}\right)^M}{e^{i B_i^{(a)}} - 1} \, ,
\end{equation}
where we defined
\begin{align}\label{BA expression}
 e^{i \sign(k_a) B_i^{(a)}} & = \xi^{(a)} (x_i^{(a)})^{k_a}
 \prod_{\substack{\text{bi-fundamentals} \\ (a,b) \text{ and } (b,a) }} \prod_{j=1}^N \frac{\sqrt{\frac{x_i^{(a)}}{x_j^{(b)}} y_{(a,b)}}}{1 - \frac{x_i^{(a)}}{x_j^{(b)}} y_{(a,b)}}
 \frac{1 - \frac{x_j^{(b)}}{x_i^{(a)}} y_{(b,a)}}{\sqrt{\frac{x_j^{(b)}}{x_i^{(a)}} y_{(b,a)}}} \nn \\
 & \hskip 1truecm \times \prod_{\substack{\text{fundamentals} \\ a}} \frac{\sqrt{x_i^{(a)} y_a}}{1 - x_i^{(a)} y_a} \,\, 
 \prod_{\substack{\text{anti-fundamentals} \\ a}} \frac{1 - \frac{1}{x_i^{(a)}} \tilde y_a}{\sqrt{\frac{1}{x_i^{(a)}} \tilde y_a}} \, ,
\end{align}
and adjoints are identified with bi-fundamentals connecting the same gauge group ($a=b$).
In this way the contributions from the residues at the origin have been moved to the solutions of the ``Bethe Ansatz Equations'' (BAEs)
\begin{align}\label{BAEs}
 e^{i \sign(k_a) B_i^{(a)}}=1 \, .
\end{align}
It is convenient to use the  variables $u_i^{(a)}$ and $\Delta_I$, defined modulo $2\pi$,\footnote{Notice that the index is a holomorphic function of $y_I$ and $\xi$. There is no loss of generality in  restrict  to  the case of purely imaginary chemical potentials $\Delta$ in \eqref{u Delta variable}.}
\be\label{u Delta variable}
x_i^{(a)}= e^{i u_i^{(a)}} \;, \qquad\qquad y_I = e^{i\Delta_I}\;, \qquad\qquad \xi^{(a)} = e^{i\Delta_m^{(a)}} \, ,
\ee
and take the logarithm of the Bethe ansatz equations 
\begin{align}\label{log BAEs0}
 0 & = \log \left[ \text{RHS of \eqref{BA expression}} \right] - 2 \pi i n_i^{(a)} 
 \, ,
\end{align}
where $n_i^{(a)}$ are integers that parameterize the angular ambiguities. The BAEs \eqref{log BAEs0} can be obtained as critical points of a ``Bethe potential" $\cV(u_i^{(a)})$.

We then need to solve these auxiliary equations in the large $N$ limit. Once the distribution of poles in the integrand in the large $N$ limit has been found, we can finally evaluate the index by computing the residue of the
resummed integrand of \eqref{path} at the solutions of \eqref{log BAEs0}.  
In the final expression, the dependence on $M$ disappears.

\section{The large $N$ limit of the index}
\label{proof vanishing forces}

We are interested in the properties of the topologically twisted index  in the large $N$ limit of theories with  an M-theory dual. 
We focus on quiver Chern-Simons-Yang-Mills gauge theories with gauge group
\be \label{quiver} \cG = \prod_{a=1}^{|G|} \U(N)_a \, ,\ee
and bi-fundamental, adjoint and fundamental chiral multiplets.  Most of the conjectured theories living on M2-branes probing $\text{CY}_4$ singularities
are of this form. Moreover, many of them are obtained by adding Chern-Simons terms and fundamental flavors to quivers appeared in the four-dimensional literature as 
describing D3-branes probing $\text{CY}_3$ singularities.  We refer to these theories as quivers {\it with a 4d parent}. In order to have a $\text{CY}_4$ moduli space, the  Chern-Simons couplings must satisfy
\be\label{CScontr}
\sum_{a=1}^{|G|} k_a = 0 \, .
\ee
The M-theory phase of these theories is obtained for $N\gg k_a$ and this is the limit we consider here. We expect the index to scale as $N^{3/2}$.

As in \cite{Benini:2015eyy}, we consider the following ansatz for  the large $N$ saddle-point eigenvalue distribution:
\be\label{ansatz alpha}
u_i^{(a)} = i N^{1/2} t_i + v_i^{(a)} \, .
\ee
Notice that the imaginary parts of all the  $u_i^{(a)}$ are equal. In the large $N$ limit, we define the continuous functions $t_i = t(i/N)$ and $v_i^{(a)}= v^{(a)}(i/N)$ and  we introduce the density of eigenvalues 
\be \rho(t) = \frac1N\, \frac{di}{dt}\, , \ee
normalized as $ \int dt\, \rho(t) = 1$. 

The large $N$ limit of the Bethe potential is performed in details in Appendix \ref{Bethe potential rules N32}, generalizing the analysis in \cite{Benini:2015eyy}. Here, we report the final result and some of the crucial subtleties.
We need to require the cancellations of long-range forces in the BAEs, as originally observed in a similar context in  \cite{Jafferis:2011zi}, and this imposes some
constraints on the quiver. Once these are satisfied,  the Bethe potential $\cV$ becomes a local functional of $\rho(t)$ and $v_i^{(a)}(t)$ and it scales as  $N^{3/2}$. 
The same constraints guarantee that the index itself scales as $N^{3/2}$.

\subsection{Cancellation of long-range forces}
\label{long-range}

As in \cite{Jafferis:2011zi}, when bi-fundamentals are present, we need to cancel  long-range forces in the BAEs.  These are detected by considering the force exerted by the eigenvalue $u_j^{(b)}$ on the eigenvalue $u_i^{(a)}$ in \eqref{log BAEs0}. They can grow with large powers of $N$ and need to be canceled by imposing constraint on the quiver and matter content if necessary. Since $u_j^{(b)}- u_i^{(a)} \sim \sqrt{N}$ for $i\neq j$, when the long-range forces vanish, the BAE and the Bethe potential get only
contributions from $i\sim j$ and they become local functionals of $\rho(t)$ and $v_i^{(a)}(t)$.

Let us consider the effects of such long-range forces in the Bethe potential $\cV$. A single bi-fundamental field connecting gauge groups $a$ and $b$ contributes terms of the form
\be\label{N3}
\sum_{i<j} \frac{\left(u_i^{(a)} - u_j^{(b)}\right)^2}{4} - \sum_{i<j} \frac{\left(u_j^{(a)} - u_i^{(b)}\right)^2}{4}\, ,
\ee
to the Bethe potential [see Eq.\,\eqref{forces chiral}].
In the large $N$ limit, they are of order $N^{5/2}$.
In order to cancel these terms, we are then forced, as  in \cite{Jafferis:2011zi}, to consider quivers where for each bi-fundamental connecting $a$ and $b$ there is also a bi-fundamental connecting $b$ and $a$. The contribution of the two bi-fundamentals then cancel out [see Eq.\,\eqref{reflection applied 34} and Eq.\,\eqref{reflection applied 12}].

From a pair of  bi-fundamentals, we  get another contribution to the Bethe potential of the form  [see Eq.\,\eqref{bi-fundamentals forces2}]
\begin{equation}\label{N2ang}
- \frac{1}{2} \left[ \left(\Delta_{(a,b)} - \pi \right) + \left( \Delta_{(b,a)} - \pi \right) \right] \sum_{j\neq i}^N  \left(u_i^{(a)} -u_j^{(b)}\right) \text{sign}{(i-j)} \, .
\end{equation}
This term can be canceled by the contribution of the angular ambiguities in \eqref{log BAEs0} to the Bethe potential $\cV$
\begin{equation}
  2 \pi \sum_{i=1}^N   n_i^{(a)} u_i^{(a)}  \, ,
\end{equation}
provided that,\footnote{This is actually true only when $N$ is odd. For even $N$ we are left with a common factor $\pi \sum_{i=1}^N u_i^{(a)}$ which can be
reabsorbed in the definition of $\xi^{(a)}$.}
\begin{equation}\label{no long-range-forces Bethe}
\sum_{I \in a} (\pi  - \Delta_I) \in 2 \pi \bZ \, ,
\end{equation}
where the sum is taken over all bi-fundamental fields with one leg in the node $a$.\footnote{Adjoint fields are supposed to be counted twice.}
Since for any reasonable quiver the number of arrows entering a node is the same as the number of arrow leaving it, this equation is obviously equivalent to $\sum_{I \in a} \Delta_I \in 2 \pi \bZ$ and can be also written as 
\begin{equation}\label{no long-range-forces Bethe0}
\prod_{I \in a}  y_I =1 \, .
\end{equation}
This condition implies that the sum of the charges under all global symmetries of the bi-fundamental fields at each node must vanish. For quivers with a 4d parent, 
this is equivalent to the absence of anomalies for the global symmetries of the 4d theory. Taking the product over all the nodes in a quiver, we also get
\be\label{globalJ}
\Tr J =0 \, ,
\ee
for any global symmetry of the theory, where the trace is taken over all the bi-fundamental fermions.

There are also contributions to  the Bethe potential  of $\cO(N^2)$.  The Chern-Simons terms give indeed
\be
\sum_a k_a  \sum_i^N  \frac{\left(u_i^{(a)}\right)^2}{2} \, .
\ee
However, the $\cO(N^2)$ term cancels out when the condition \eqref{CScontr} is satisfied. Finally, there is a $O(N^2)$ contributions of the fundamental fields  given by
\eqref{fundcontr}. This  vanishes if the {\it total} number of fundamental and anti-fundamental fields in the quiver is the same. 

We turn next to the large $N$ limit of the index. The vector multiplet contributes a term of $\cO(N^{5/2})$ [see Eq.\,\eqref{gauge entropy}]
\begin{equation}\label{indexscalingg}
 i \sum_{i<j}^N \left( u_i^{(a)} - u_j^{(a)} + \pi \right) \, .
\end{equation}
The contribution of $\cO(N^{5/2})$ of a chiral multiplet is  [see Eq.\,\eqref{forces chiral ba}]:
\begin{align}\label{indexscalingb}
 i \sum_{I \in a} \frac{\left(\fn_I - 1\right)}{2} \sum_{i<j}^N \left( u_i^{(a)} - u_j^{(a)} + \pi \right) \, .
\end{align}
To have a cancellation between terms of $\cO(N^{5/2})$ and $\cO(N^2)$ for each node $a$ we must have
\begin{equation}\label{no long-range forces}
 2 + \sum_{I \in a} \left(\fn_I - 1\right) = 0 \, .
\end{equation}
For a quiver with a 4d parent, this condition is equivalent to the absence of anomalies for the R-symmetry.
If we sum over all the nodes we also obtain the following constraint
\begin{equation}\label{globalR}
 {\quad \rule[-1.4em]{0pt}{3.4em} |G| + \sum_{I} \left(\fn_I - 1\right) = 0\, .\quad}
\end{equation}
The above equation is equivalent to $\Tr R = 0$ for any trial R-symmetry, where the trace is taken over all the bi-fundamental fermions and gauginos.

Summarizing, we can have  a $N^{3/2}$ scaling for the index if for each bi-fundamental connecting $a$ and $b$ there is also a bi-fundamental connecting $b$ and $a$,
the total number of fundamental and anti-fundamental fields in the quiver is equal,  Eq.\,\eqref{no long-range-forces Bethe0} and Eq.\,\eqref{no long-range forces}
are fulfilled. All these conditions are automatically satisfied for quivers with a toric vector-like 4d parent and also for other interesting models like \cite{Martelli:2009ga}.
However, they rule out many interesting chiral quivers appeared in the literature on M2-branes. We note a striking analogy with the conditions imposed in \cite{Jafferis:2011zi}. 

\subsection{Bethe potential at large $N$}
\label{large N Bethe potential rules}

In this section we give the general rules for constructing the  Bethe potential for any $\cN \geq 2$ quiver gauge theory which respects the constraints \eqref{no long-range-forces Bethe0} and \eqref{no long-range forces}:

\begin{enumerate}
 \item Each group $a$ with CS level $k_a$ and chemical potential for the topological symmetry $\Delta_m^{(a)}$ contributes the term
 \begin{equation}\label{CS cV rule}
  {\quad \rule[-1.4em]{0pt}{3.4em} - i k_a N^{3/2} \int dt\, \rho(t)\, t\, v_a(t) - i \Delta_m^{(a)} N^{3/2} \int dt\, \rho(t)\, t\, . \quad}
 \end{equation}
 \item A pair of bi-fundamental fields, one with chemical potential $\Delta_{(a,b)}$ and transforming in the $({\bf N},\overline{\bf N})$
 of $\U(N)_a \times \U(N)_b$ and the other with chemical potential $\Delta_{(b,a)}$ and transforming in the $(\overline{\bf N},{\bf N})$ of $\U(N)_a \times \U(N)_b$, contributes
 \begin{equation}\label{Bethe potential bi-fundamental}
  {\quad \rule[-1.4em]{0pt}{3.4em} i N^{3/2} \int dt\, \rho(t)^2 \left[g_+\left(\delta v(t) + \Delta_{(b,a)}\right) - g_-\left(\delta v(t) - \Delta_{(a,b)}\right)\right]\, , \quad}
 \end{equation}
 where $\delta v(t) \equiv v_b (t) - v_a (t)$. Here, we introduced the polynomial functions
 \begin{equation}\label{gp gm}
  g_\pm(u) = \frac{u^3}6 \mp \frac\pi2 u^2 + \frac{\pi^2}3 u \;,\qquad\qquad g_\pm'(u) = \frac{u^2}2 \mp \pi u + \frac{\pi^2}3 \, ,
 \end{equation}
 and we assumed them to be in the range
 \be
\label{inequalities for delta v0}
0 < \delta v + \Delta_{(b,a)} < 2\pi \;,\qquad\qquad\qquad -2\pi < \delta v - \Delta_{(a,b)} < 0 \, ,
\ee
which can be adjusted by choosing a specific determination for the $\Delta$ that  are  defined modulo $2\pi$. We will also assume, and 
this is certainly true if $\delta v$ assumes the value zero, that
\be
\label{general range of Delta_a0}
0 < \Delta_I < 2\pi \, .
\ee
 \item An adjoint field with chemical potential $\Delta_{(a,a)}$, contributes
 \begin{equation}\label{adjoint cV rule}
  {\quad \rule[-1.4em]{0pt}{3.4em} i g_+(\Delta_{(a,a)}) N^{3/2} \int dt\, \rho(t)^2\, . \quad}
 \end{equation}
 \item A field $X_a$ with chemical potential $\Delta_a$ transforming in the fundamental of $\U(N)_a$, contributes
 \begin{equation}\label{fund cV rule}
  {\quad \rule[-1.4em]{0pt}{3.4em} -\frac{i}{2} N^{3/2} \int dt\, \rho(t)\, |t| \Big[v_a(t) + \big( \Delta_a - \pi \big)\Big] \, , \quad}
 \end{equation}
 while an anti-fundamental field with chemical potential $\tilde \Delta_a$ contributes\footnote{We also assume $0<v_a(t)+\Delta_a<2 \pi$ and $0<-v_a(t)+\tilde \Delta_a<2 \pi$.}
 \begin{equation}\label{anti-fund cV rule}
  {\quad \rule[-1.4em]{0pt}{3.4em} \frac{i}{2} N^{3/2} \int dt\, \rho(t)\, |t| \Big[v_a(t) - \big( \tilde \Delta_a - \pi \big)\Big] \, . \quad}
 \end{equation}
\end{enumerate}

Adding all the previous contributions for all gauge groups and matter fields, we get a local functional $\cV(\rho(t) ,v_a(t), \Delta_I)$ that we need to extremize
 with respect to the continuous functions
$\rho(t)$ and $v_a(t)$ with the constraint $\int dt \rho(t) =1$. Equivalently we can introduce a Lagrange multiplier $\mu$ and extremize
\be\label{bethefunctional}
\cV\left(\rho(t) ,v_a(t), \Delta_I\right) - \mu \left (\int dt \rho(t) -1 \right)  \, .
\ee
This gives the large $N$ limit distribution of poles in the index matrix model.  

The solutions of the BAEs have a typical piece-wise structure.
 Eq.\,\eqref{bethefunctional} is the right functional to extremize when the conditions \eqref{inequalities for delta v0} are satisfied.
 This gives a central region where $\rho(t)$ and $v_a(t)$ vary with continuity as functions of $t$. 
When one of the $\delta v(t)$ associated with a pair of bi-fundamental hits the boundaries of the inequalities \eqref{inequalities for delta v0}, 
it remains frozen to a constant value  $\delta v = -\Delta_{(b,a)}$ (mod $2\pi$) or  $\delta v = \Delta_{(b,a)}$  (mod $2\pi$)
for larger (or smaller) values of $t$. This creates ``tail'' regions where one or more $\delta v$ are frozen and the functional  \eqref{bethefunctional}
is extremized with respect to the remaining variables. In the tails, the derivative of  \eqref{bethefunctional} with respect to the frozen variable is not zero
and it is compensated by subleading terms that we omitted.  To be precise, the equations of motion  [see Eq.\,\eqref{talis eom}] includes
subleading terms
 \begin{equation}\label{tails} \frac{\partial \mathcal{V}}{\partial (\delta v)} + i N \rho \left[ \Li_1 \left( e^{i \left( \delta v + \Delta_{(b,a)} \right)} \right)
 - \Li_1 \left( e^{i \left( \delta v - \Delta_{(a,b)} \right)} \right)\right] = 0 \, ,
 \end{equation}
which are negligible except on the tails, where $\delta v$ has exponentially small correction to the large $N$ constant value 
 \begin{equation}\label{tails2}
 \delta v = - \Delta_{(b,a)} + e^{-N^{1/2} Y_{(b,a)}}  \, , \qquad 
 \delta v = \Delta_{(a,b)}   - e^{-N^{1/2} Y_{(a,b)}} \, ,   \qquad  (\text{mod } 2\pi)\, .
 \end{equation}
The quantities $Y$ are determined by equation \eqref{tails} and contribute to the large $N$ limit of the index.

\subsubsection{The ABJM example} As an example, we briefly review here the solution to the BAEs for the ABJM model found in \cite{Benini:2015eyy}.
A more complicated example, for a $\U(N)^3$ quiver is discussed in Appendix \ref{SPP}. The reader can find many other examples in \cite{Hosseini:2016ume}.
ABJM is a Chern-Simons-matter theory with gauge group $\U(N)_k \times \U(N)_{-k}$, with two pairs of  bi-fundamental fields $A_i$ and $B_i$ transforming in the 
representation $({\bf N},\overline{\bf N})$ and $(\overline{\bf N},{\bf N})$ of the gauge group, respectively, and superpotential
\be
W = \Tr \left( A_1B_1A_2B_2 - A_1B_2A_2B_1 \right) \, .
\ee
We assign chemical potentials $\Delta_{1,2}\in [0,2\pi]$ to $A_i$ and  $\Delta_{3,4}\in [0,2\pi]$ to $B_i$. Invariance of the superpotential under the global symmetries
requires that $\sum_i \Delta_i \in 2\pi \mathbb{Z}$ (or equivalently $\prod_{i} y_i = 1$).  Conditions \eqref{no long-range-forces Bethe0} and \eqref{no long-range forces} are then automatically satisfied. The Bethe potential, for $k=1$\footnote{There is a similar solution  for $k>1$ with $\cV\rightarrow \cV \sqrt{k}$. However, we also need
to take into account that, for $k>1$, there are further identifications among the $\Delta_I$ due to discrete $\mathbb{Z}_k$ symmetries of the quiver.}, reads
\be
\cV =  i N^{3/2} \int dt \left\{ t\, \rho(t)\, \delta v(t) + \rho(t)^2 \left[ \, \sum_{a=3,4} g_+ \left( \delta v(t) + \Delta_a \right) - \sum_{a=1,2} g_- \left( \delta v(t) - \Delta_a \right)\right] \right\} \, .
\ee
The solution for $\sum_i \Delta_i = 2 \pi$ and $\Delta_1\leq \Delta_2$, $\Delta_3\leq \Delta_4$  is as follows \cite{Benini:2015eyy}. We have a central region where 
\be
\begin{aligned}
\rho &= \frac{2\pi \mu + t(\Delta_3 \Delta_4 - \Delta_1 \Delta_2)}{(\Delta_1 + \Delta_3)(\Delta_2 + \Delta_3)(\Delta_1 + \Delta_4)(\Delta_2 + \Delta_4)} \\[.5em]
\delta v &= \frac{\mu(\Delta_1 \Delta_2 - \Delta_3 \Delta_4) + t \sum_{a<b<c} \Delta_a \Delta_b \Delta_c }{ 2\pi \mu + t ( \Delta_3 \Delta_4 - \Delta_1 \Delta_2) }
\end{aligned}
\qquad\qquad -\frac{\mu}{\Delta_4}   < t < \frac{\mu}{\Delta_2} \, .
\ee
When $\delta v$ hits $-\Delta_3$ on the left the solution becomes
\be
\rho = \frac{\mu + t\Delta_3}{(\Delta_1 + \Delta_3)(\Delta_2 + \Delta_3)(\Delta_4 - \Delta_3)} \, , \,\,   \delta v = - \Delta_3 \;, \,\, \qquad  -\frac{\mu}{\Delta_3}  < t <-\frac{\mu}{\Delta_4}  \, ,
\ee
with the exponentially small correction  $Y_3 = (- t\Delta_4 -\mu)/(\Delta_4 - \Delta_3)$, while when  $\delta v$ hits $\Delta_1$ on the right the solution becomes
\be
\rho = \frac{\mu - t \Delta_1}{(\Delta_1 + \Delta_3)(\Delta_1 + \Delta_4)(\Delta_2 - \Delta_1)} \, ,\,\,
\delta v = \Delta_1 \;,\,\, \qquad \frac{\mu}{\Delta_2}  < t < \frac{\mu}{\Delta_1}  \, ,
\ee
with $Y_1 =(t\Delta_2 - \mu)/(\Delta_2 - \Delta_1)$.
Finally, the on-shell Bethe potential is
\be
\label{Vsolution sum 2pi -- end}
\cV = \frac {2i}{3} \mu N^{3/2}  = \frac {2i N^{3/2}}{3} \sqrt{ 2 \Delta_1 \Delta_2 \Delta_3 \Delta_4} \, .
\ee
There is also a solution for $\sum_i \Delta_i = 6 \pi$ which, however, is obtained by the previous one by a discrete symmetry $\Delta_i \rightarrow 2 \pi -\Delta_i$ $\left(y_i\rightarrow y_i^{-1}\right)$.

\subsection{The index at large $N$}

We now turn to the large $N$ limit of the  index for an $\cN \geq 2$ quiver gauge theory without long-range forces. 
Here, we give the rules for constructing the index once we know the large $N$ solution $\rho(t),v_a(t)$ of the BAE, which is obtained by extremizing \eqref{bethefunctional}.
The final result scales as $N^{3/2}$.

\begin{enumerate}
 \item For each group $a$, the contribution of the Vandermonde determinant is
 \begin{equation}\label{gaugecontribution}
  {\quad \rule[-1.4em]{0pt}{3.4em} -\frac{\pi^2}{3} N^{3/2} \int dt\, \rho(t)^2\, . \quad}
 \end{equation}
 \item A $\U(1)_a$ topological symmetry with flux $\ft_a$ contributes
 \begin{equation}
  {\quad \rule[-1.4em]{0pt}{3.4em} - \ft_a N^{3/2} \int dt\, \rho(t)\, t\, . \quad}
 \end{equation}
 \item A pair of bi-fundamental fields, one with magnetic flux $\fn_{(a,b)}$ and chemical potential $\Delta_{(a,b)}$ transforming
 in the $({\bf N},\overline{\bf N})$ of $\U(N)_a \times \U(N)_b$ and the other  with magnetic flux $\fn_{(b,a)}$ and chemical potential $\Delta_{(b,a)}$
 transforming in the $(\overline{\bf N},{\bf N})$ of $\U(N)_a \times \U(N)_b$, contributes
 \begin{equation}\label{bicontribution}
  {\quad \rule[-1.4em]{0pt}{3.4em} - N^{3/2} \int dt\, \rho(t)^2 \left[(\fn_{(b,a)}-1)\, g'_+ \left(\delta v(t) + \Delta_{(b,a)}\right)
  + (\fn_{(a,b)}-1)\, g'_- \left(\delta v(t) - \Delta_{(a,b)}\right)\right]\, . \quad}
 \end{equation}
 \item An adjoint field with magnetic flux $\fn_{(a,a)}$ and chemical potential $\Delta_{(a,a)}$, contributes
 \begin{equation}
  {\quad \rule[-1.4em]{0pt}{3.4em} - (\fn_{(a,a)}-1)\, g'_+ \left(\Delta_{(a,a)}\right) N^{3/2} \int dt\, \rho(t)^2\, . \quad}
 \end{equation}
 \item A field $X_a$ with magnetic flux $\fn_a$ transforming in the fundamental of $\U(N)_a$, contributes
 \begin{equation}
  {\quad \rule[-1.4em]{0pt}{3.4em} \frac{1}{2} (\fn_a - 1) N^{3/2} \int dt\, \rho(t) |t| \, , \quad}
 \end{equation}
 while an anti-fundamental field with magnetic flux $\tilde \fn_a$ contributes
 \begin{equation}
  {\quad \rule[-1.4em]{0pt}{3.4em} \frac{1}{2} (\tilde\fn_a - 1) N^{3/2} \int dt\, \rho(t) |t| \, . \quad}
 \end{equation}
 \item The tails,  where $\delta v$ has a constant value, as in \eqref{tails}, contribute
 \be\label{tailcontribution}
 - \fn_{(b,a)} N^{3/2} \int_{\delta v \approx - \Delta_{(b,a)} (\text{mod } 2\pi)}  dt \, \rho(t) Y_{(b,a)} - \fn_{(a,b)} N^{3/2} \int_{\delta v \approx  \Delta_{(a,b)} (\text{mod } 2\pi)}  dt \, \rho(t) Y_{(a,b)} \, ,
 \ee
 where the integral are taken on the tails regions.
    
\end{enumerate}

 As an example, for ABJM, using the above solution  of the BAEs, one obtains the simple expression \cite{Benini:2015eyy}
 \be  \re\log Z  = -\frac{N^{3/2} }{3} \sqrt{2 \Delta_1\Delta_2\Delta_3\Delta_4} \sum_a \frac{\fn_a}{\Delta_a}\, . \ee
  
 \section{Bethe potential versus free energy on $S^3$} \label{sec:freeenergy}

We would like to emphasize a remarkable connection of the large $N$ limit of the Bethe potential, which for us is an auxiliary quantity, with the large $N$ limit of the free energy $F$ on $S^3$ of the same ${\cal N}\geq 2$ theory.

Recall that the free energy $F$ on $S^3$ of an ${\cal N}=2$ theory is a function of trial R-charges $\Delta_I$  for the chiral fields \cite{Jafferis:2010un,Hama:2010av}. They parameterize the
curvature coupling of the supersymmetric Lagrangian on $S^3$. 
The $S^3$ free energy can  be computed using localization and reduced to a matrix 
model \cite{Kapustin:2009kz}. The large $N$ limit of the free energy, for $N \gg k_a$, has been computed in \cite{Herzog:2010hf,Jafferis:2011zi,Gulotta:2011vp,Gulotta:2011aa,Gulotta:2011si} and scales as $N^{3/2}$. For example, the free energy for
 ABJM with $k=1$ reads  \cite{Jafferis:2011zi}
 \be F = \frac{4 \pi N^{3/2}} {3} \sqrt{ 2 \Delta_1 \Delta_2 \Delta_3 \Delta_4} \, .
 \ee
We notice a striking similarity with \eqref{Vsolution sum 2pi -- end}. This is not a coincidence and generalizes to other theories. Indeed, remarkably, although the finite $N$ matrix models are quite different, for any ${\cal N}=2$ theory, the large $N$ limit of the Bethe potential becomes exactly equal to the large $N$ limit of the free energy $F$ on $S^3$. We can indeed compare the rules for constructing the Bethe potential with the rules for constructing the large $N$ limit of $F$, which have
been derived in  \cite{Jafferis:2011zi}. By comparing the rules in Section \ref{large N Bethe potential rules} with the rules given in Section 2.2 of  \cite{Jafferis:2011zi},
we observe that they are indeed the same up to a normalization. For reader's convenience the map is explicitly given in Table.\,\ref{Bethe map}.
\begin{table}[h!!!!]
\centering
\begin{tabular}{@{}l l@{}} \toprule \toprule
 Bethe potential & $S^3$ free energy\\
 \toprule
 $k_a$ & $- k_a$ \\
 $\mu$ & $\frac{\mu}{2}$ \\
 $v_a(t)$ & $\frac{v_a(t)}{2}$ \\
 $\rho(t)$ & $4 \rho(t)$ \\
 $\Delta_I$ & $\pi \Delta_I$ \\
 $\Delta_m$ & $- \pi \Delta_m$ \\
 $\mathcal{V}$ & $4 \pi i F$ \\
 $\mathcal{V} \Big|_{\text{BAEs}}$ & $\frac{i \pi}{2} F \Big|_{\text{On-shell}}$ \\[0.3cm]
 \toprule
 \toprule
\end{tabular}
\caption{The large $N$ Bethe potential versus the $S^3$ free energy of \cite{Jafferis:2011zi}.}
\label{Bethe map}
\end{table}
The conditions for cancellation of long-range forces (and therefore the allowed models) are also remarkably similar.

It might be surprising that our chemical potentials for global symmetries are mapped to R-charges in the free energy. However, remember that our $\Delta_I$ are angular variables. The invariance of the superpotential under the global symmetries implies that 
\be
\prod_{I \in \text{matter fields}} y_I = 1 \, ,
\ee
in each term of the superpotential, which 
is equivalent to 
\be \label{2pi}
\sum_{I \in \text{matter fields}} \Delta_I = 2 \pi \ell \,  \qquad \ell\in \mathbb{Z} \, ,
\ee 
where now $\Delta_I$ are the index chemical potentials.  Under the assumption $0<\Delta_I<2\pi$, few values of $\ell$ are actually allowed.
In the ABJM model reviewed above, only $\ell=1$ and $\ell=3$ give sensible results, with $\ell=3$ related to $\ell=1$ by a discrete symmetry of the model.
We found a similar result in  all the examples we have checked, and we do believe indeed that a solution of the BAE only exists when  
\be\label{superpotential0} 
\sum_{I \in \text{matter fields}} \Delta_I = 2 \pi\, ,
\ee
for each term of the superpotential, up to solutions related by discrete symmetries.
$\Delta_I / \pi$ then behaves at all effects like  a trial R-symmetry of the theory and we can  compare the index chemical potentials in $\cV$ with the R-charges in $F$.

\section{An index theorem for the twisted matrix model}\label{sec:index}

Under mild assumptions, the index at large $N$ can be actually extracted from the Bethe potential with a simple formula.

\begin{theorem}\label{entropy theorem}
The index of any $\cN \geq 2$ quiver gauge theory which respects the constraints \eqref{no long-range-forces Bethe0} and \eqref{no long-range forces}, and satisfies in addition \eqref{superpotential0}, can be written as
\begin{equation}\label{Z large N conjecture}
 {\quad \rule[-1.4em]{0pt}{3.4em}
 \re\log Z = - \frac{2}{\pi} \, \wb{\mathcal{V}}(\Delta_I) \,
 - \sum_{I}\, \left[ \left(\fn_I - \frac{\Delta_I}{\pi}\right) \frac{\partial \wb{\mathcal{V}}(\Delta_I)}{\partial \Delta_I}
  \right] \, ,
 \quad}
\end{equation}
where $\wb{\cV}$ is the extremal value of the functional \eqref{bethefunctional}
\begin{equation}\label{virial theorem}
 {\quad \rule[-1.4em]{0pt}{3.4em}
 \wb{\mathcal{V}}(\Delta_I) \equiv - i \mathcal{V} \Big|_{\text{BAEs}} = \frac{2}{3} \mu N^{3/2} \, ,}
\end{equation}
and $\mu$ is the  Lagrange multiplier appearing in \eqref{bethefunctional}.\footnote{The second identity in \eqref{virial theorem} is a consequence of a virial theorem for matrix models (see Appendix B of \cite{Gulotta:2011si}).}

\end{theorem}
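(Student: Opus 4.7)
My plan is to decompose the argument into three observations and then assemble them.

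The first observation is that every rule in Section 3.3 makes $\log Z$ manifestly linear in the fluxes $\fn_I$. Differentiating rule by rule and comparing with the Bethe-potential rules of Section 3.2, one checks the matching identity
\[
\partial_{\fn_I}\log Z \;=\; -\,\partial_{\Delta_I}\wb{\cV}\qquad\text{for each matter field } I.
\]
For a bi-fundamental, for instance, $\partial_{\Delta_{(b,a)}}g_+\bigl(\delta v+\Delta_{(b,a)}\bigr)=g'_+\bigl(\delta v+\Delta_{(b,a)}\bigr)$ reproduces \eqref{bicontribution} up to an overall sign, and the envelope theorem (which rests on the BAE $\delta\cV/\delta v_a=0$, the Lagrange condition $\delta\cV/\delta\rho=\mu$ together with $\int\!\rho\,dt=1$) lets us drop the implicit dependence of $\rho(t)$, $v_a(t)$ and the tail endpoints on $\Delta_I$. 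Writing $C(\Delta):=\log Z|_{\fn=0}$, linearity in $\fn_I$ gives
\[
\log Z \;=\; -\sum_{I}\fn_I\,\partial_{\Delta_I}\wb{\cV} \;+\; C(\Delta)\,.
\]

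The second observation is a degree-$2$ Euler identity. The polynomials $g_\pm(u)=\tfrac16 u(u\mp\pi)(u\mp2\pi)$ are jointly homogeneous of degree $3$ in $(u,\pi)$, so under the combined rescaling $(\Delta_I,\Delta_m^{(a)},v_a,\pi)\to\lambda(\Delta_I,\Delta_m^{(a)},v_a,\pi)$ and $(t,\rho)\to(\lambda t,\rho/\lambda)$, every term of the Bethe functional of Section 3.2 scales as $\lambda^2$, while $\int\!\rho\,dt=1$ and every superpotential constraint \eqref{superpotential0} are preserved. Differentiating the extremized value at $\lambda=1$, the $\delta\cV/\delta v_a$ contributions vanish by the BAE and the $\delta\cV/\delta\rho=\mu$ contribution reduces to $\mu\!\int\!dt\,\partial_t(t\rho)$, which is a boundary term that vanishes because $\rho$ has compact support. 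One obtains
\[
\sum_I\Delta_I\,\partial_{\Delta_I}\wb{\cV}\;+\;\pi\,\partial_\pi\wb{\cV}\;=\;2\,\wb{\cV}\,.
\]
The virial identity $\wb{\cV}=\tfrac{2}{3}\mu N^{3/2}$ is the parallel statement for the cruder rescaling of only $(t,\rho)$: the Bethe functional splits into pieces of degree $1$ and $2$ in $\rho$ which are equal at the extremum, and combining with $\int\!\rho\,\delta\cV/\delta\rho=\mu$ yields the stated formula (as in Appendix B of \cite{Gulotta:2011si}).

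The substantive step is the identification $C(\Delta)=-\partial_\pi\wb{\cV}$. I would compute both sides term by term from the rules, using the $\pi$-derivative identity
\[
-\,\partial_\pi g_+(u) \;=\; g'_+(u) \;+\;\frac{\pi}{3}(u-\pi)
\]
and its $g_-$ counterpart. Fundamentals, anti-fundamentals and Chern-Simons/topological terms match on the nose, since the corresponding integrands depend on $\pi$ only through the explicit shift $\Delta_a-\pi$. For the bi-fundamental pairs and adjoints the residual is the universal $\tfrac{\pi}{3}(u-\pi)$ piece which, using $u_+=\delta v+\Delta_{(b,a)}$, $u_-=\delta v-\Delta_{(a,b)}$, pairs with the Vandermonde $-\tfrac{|G|\pi^2}{3}N^{3/2}\!\int\!\rho^2$ to the single residual
\[
-\,\frac{\pi^2}{3}\Bigl(|G|+\sum_{I}\bigl(\tfrac{\Delta_I}{\pi}-1\bigr)\Bigr)\,N^{3/2}\!\int\!dt\,\rho(t)^2\,,
\]
with the sum running over all bi-fundamental and adjoint chiral fields (each counted once, matching the counting in \eqref{globalR}). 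This residual vanishes by the $\Delta_I/\pi$ analogue of the R-charge cancellation \eqref{globalR}, which is the global consequence of the superpotential constraint \eqref{superpotential0} summed over W-terms with the same multiplicities that turn \eqref{no long-range forces} into \eqref{globalR}. The tails contribute to $C(\Delta)$ only through the $\fn$-linear formula \eqref{tailcontribution} (which is zero at $\fn=0$) and to $\partial_\pi\wb{\cV}$ only at the exponentially subleading order controlled by \eqref{tails}--\eqref{tails2}. Combining everything, $\log Z=-\sum_I\fn_I\,\partial_{\Delta_I}\wb{\cV}-\partial_\pi\wb{\cV}$, and eliminating $\partial_\pi\wb{\cV}$ with the Euler identity yields \eqref{Z large N conjecture}. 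Since the ansatz \eqref{ansatz alpha} is real and every rule in Section 3.3 is a real integral, $\re\log Z=\log Z$ at this order. The hardest part is this last step: bookkeeping all field types and the Vandermonde, and producing the $\Delta_I/\pi$ R-charge identity cleanly from \eqref{superpotential0} for a generic quiver satisfying the hypotheses.
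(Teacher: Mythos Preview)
Your strategy is essentially the paper's: treat $\pi$ as a formal variable, use that $\wb{\cV}(\Delta_I,\pi)$ is homogeneous of degree two to get the Euler relation, match $\partial_{\Delta_I}\wb{\cV}$ against the $\fn_I$-coefficients in the index rules, match $\partial_\pi\wb{\cV}$ against the remainder (the Vandermonde plus the ``$-1$'' shifts in \eqref{bicontribution}), and close with the constraint $|G|+\sum_I(\Delta_I/\pi-1)=0$ coming from \eqref{superpotential0}. Your organization---first isolating the $\fn$-linear part and a constant $C(\Delta)$, then identifying $C(\Delta)=-\partial_\pi\wb{\cV}$---is only a cosmetic repackaging of the paper's direct computation of both derivatives.

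There is, however, one point where your argument is loose and the paper's is careful: the tail contribution in your first observation. You invoke the envelope theorem (``$\delta\cV/\delta v_a=0$'') to drop the implicit $\Delta_I$-dependence of $v_a(t)$, but on the tails this \emph{fails} at leading order: there $\delta v$ is pinned to $-\Delta_{(b,a)}$ (or $\Delta_{(a,b)}$) and $\partial\cV/\partial(\delta v)=-i\rho\,Y_{(b,a)}\neq 0$, balanced only by the subleading $\Li_1$ of \eqref{tails}. The paper therefore keeps the chain-rule piece $\tfrac{\partial\cV}{\partial(\delta v)}\tfrac{\partial(\delta v)}{\partial\Delta_I}$ explicitly and shows that, with $\partial(\delta v)/\partial\Delta_{(b,a)}=-1$ on the tail, it reproduces precisely the tail term \eqref{tailcontribution}. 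Your rule-by-rule differentiation of the Section~3.2 rules gives only the bulk $g'_\pm$ piece of \eqref{bicontribution} and would miss \eqref{tailcontribution}. If you prefer to phrase it as an envelope theorem, you must apply it to the \emph{full} functional including the subleading $\Li_2$ term of \eqref{talis eom}; its explicit $\Delta_I$-derivative is $iN\rho\,\Li_1$, which is $O(N^{3/2}\rho\,Y_I)$ on the tails and supplies the missing piece. Either way the tail contribution is not captured by the naive envelope argument as stated.
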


\begin{proof}
We first replace the explicit factors of $\pi$, appearing in Eqs.\,\eqref{Bethe potential bi-fundamental}-\eqref{anti-fund cV rule}, with a formal variable $\bm{\pi}$.
Note that the ``on-shell'' Bethe potential $\wb{\mathcal{V}}$ is a homogeneous function of $\Delta_I$ and $\bm{\pi}$ and therefore it satisfies
\begin{equation}
 \wb{\mathcal{V}}(\lambda \Delta_I, \lambda \bm{\pi}) = \lambda^2 \, \wb{\mathcal{V}}(\Delta_I, \bm{\pi}) \quad \Rightarrow \quad
 \frac{\partial \wb{\mathcal{V}}(\Delta_I, \bm{\pi})}{\partial \bm{\pi}} =
 \frac{1}{\bm{\pi}} \left[ 2 \, \wb{\mathcal{V}}(\Delta_I) -\sum_I  \Delta_I \frac{\partial \wb{\mathcal{V}}(\Delta_I)}{\partial \Delta_I} \right]\, .
\end{equation}
Now, we consider a pair of bi-fundamental fields which contribute to the Bethe potential according to \eqref{Bethe potential bi-fundamental}.
The derivative of $\mathcal{V}(\Delta_I, \bm{\pi})$ with respect to  $\Delta_{(b,a)}$ and $\Delta_{(a,b)}$ is given by
 \begin{align}\label{proof bf}
  \sum_{I=(b,a),(a,b)} \fn_{I} \frac{\partial\mathcal{V}(\Delta_I, \bm{\pi})}{\partial\Delta_I} & =
  i N^{3/2} \int dt\, \rho(t)^2\, \left[ \fn_{(b,a)} g'_+ \left(\delta v(t) + \Delta_{(b,a)}\right) + \fn_{(a,b)} g'_- \left(\delta v(t) - \Delta_{(a,b)}\right)\right] \nn \\
  & +   \sum_{I=(b,a),(a,b)}  \fn_{I}  \underbrace{\frac{\partial \mathcal{V}}{\partial \rho} \frac{\partial \rho}{\partial \Delta_I}}_\text{vanishing on-shell}
  +   \sum_{I=(b,a),(a,b)} \fn_{I}  \underbrace{\frac{\partial \mathcal{V}}{\partial (\delta v)} \frac{\partial (\delta v)}{\partial \Delta_I}}_\text{tails contribution} \, .
 \end{align}
 The expression in the first line is precisely part of the contribution of a pair of bi-fundamentals \eqref{bicontribution} to the index.  
In the tails, using \eqref{tails}, we find
 \begin{equation}
 \frac{\partial (\delta v)}{\partial \Delta_{(b,a)}} = -1 \, , \qquad
 \frac{\partial (\delta v)}{\partial \Delta_{(a,b)}} = 1 \, , \qquad
 \frac{\partial \mathcal{V}}{\partial (\delta v)} = - i Y_{(b,a)} \rho \, , \qquad
 \frac{\partial \mathcal{V}}{\partial (\delta v)} = i Y_{(a,b)} \rho \, . 
 \end{equation}
Therefore, the last term in Eq.\,\eqref{proof bf} can be simplified to
 \begin{equation}
 i N^{3/2}  \fn_{(b,a)} \int_{\delta v \approx - \Delta_{(b,a)}} dt\, \rho(t)\, Y_{(b,a)}
 + i N^{3/2}  \fn_{(a,b)}  \int_{\delta v \approx \Delta_{(a,b)}} dt\, \rho(t)\, Y_{(a,b)} \, .
\end{equation}
This precisely gives the tail contribution \eqref{tailcontribution} to the index. 
Next, we take the derivative of the Bethe potential with respect to  $\bm{\pi}$. It can be written as
 \begin{align}\label{pi derivative}
 \frac{\partial \mathcal{V}}{\partial \bm{\pi}}  & = - i N^{3/2} \int dt\, \rho(t)^2 \left[ g'_+ \left(\delta v(t) + \Delta_{(b,a)}\right) + g'_- \left(\delta v(t) - \Delta_{(a,b)}\right) \right] \nn \\
 & + i N^{3/2} \int dt\, \rho(t)^2\, \left[ \frac{2 \bm{\pi}^2}{3} - \frac{\bm{\pi}}{3} \left( \Delta_{(b,a)} + \Delta_{(a,b)} \right) \right] \, .
 \end{align}
 The expression in the first line completes the contribution of a pair of bi-fundamentals \eqref{bicontribution} to the index.  
The expression  in the second line, after summing  over all the bi-fundamental pairs,  can be written as
 \begin{equation}
\sum_{\text{pairs}} \left[ \frac{2 \bm{\pi}^2}{3} - \frac{\bm{\pi}}{3} \left( \Delta_{(b,a)} + \Delta_{(a,b)} \right)\right]
 = \frac{\bm{\pi}}{3} \sum_{I} \left( \bm{\pi} - \Delta_I \right)
 = \frac{\bm{\pi}^2}{3} |G| \, , 
 \end{equation}
which is precisely the contribution of the gauge fields  \eqref{gaugecontribution} to the index.  
Here, we used the condition
 \begin{equation}\label{pi constraint}
 \bm{\pi} |G| + \sum_{I} \left( \Delta_I - \bm{\pi} \right) = 0 \, .
 \end{equation}
This condition follows from the fact that, assuming  \eqref{superpotential0} for each superpotential term, $\Delta_I/\pi$ behaves as a trial R-symmetry, so that  \eqref{no long-range forces} yields
\begin{equation}\label{no long-range forces2}
 2 + \sum_{I \in a} \left(\frac{\Delta_I}{\pi} - 1\right) = 0 \, ,
\end{equation}
which, summed over all the nodes, since each bi-fundamental field belongs precisely to two nodes, gives \eqref{pi constraint}.
Condition \eqref{pi constraint} is indeed equivalent to ${\rm Tr} R=0$, where the trace is taken over the bi-fundamental  fermions and gauginos in the quiver and $R$ is an R-symmetry. 
Combining everything as in the right hand side of Eq.\,\eqref{Z large N conjecture} we obtain the contribution of gauge and bi-fundamental fields to the index.
The proof for all the other matter fields and the topological symmetry is straightforward.
\end{proof}

If we ignore the linear relation among the chemical potentials, we can always use a  set of  $\Delta_I$ such that $\wb{\cV}$ is a homogeneous function of degree two of the
$\Delta_I$ alone.\footnote{This is what happens in \eqref{Vsolution sum 2pi -- end} for ABJM. Recall that $\sum_i \Delta_i = 2 \pi$ so that the four $\Delta_i$ are not linearly independent.}
In this case, the index theorem simplifies to
\begin{equation}\label{Z large N conjecture2}
 {\quad \rule[-1.4em]{0pt}{3.4em}
 \re\log Z = 
 - \sum_{I}\, \fn_I \frac{\partial \wb{\mathcal{V}}(\Delta_I)}{\partial \Delta_I}
  \, .
 \quad}
\end{equation}

\section{Theories with $N^{5/3}$ scaling of the index}\label{N53}

Chern-Simons quivers of the form \eqref{quiver} have a rich parameter space. If  condition \eqref{CScontr} is satisfied and $N\gg k_a$,  they have an M-theory weakly coupled dual.  In the t'Hooft limit, $N,k_a\gg 1$ with $N/k_a =\lambda_a$ fixed and large, they have a type IIA weakly coupled dual. When instead
\be\label{CScontr2}
\sum_{a=1}^{|G|} k_a \ne 0 \, ,
\ee
they probe massive type IIA \cite{Gaiotto:2009mv}. There is an interesting limit, given \eqref{CScontr2}, where again   $N\gg k_a$. The limit is no more an M-theory phase \cite{Aharony:2010af},
but rather an extreme phase of massive type IIA. Supergravity duals of this type of phases have been found in \cite{Petrini:2009ur,Lust:2009mb,Aharony:2010af,Tomasiello:2010zz,Guarino:2015jca,Fluder:2015eoa,Pang:2015vna,Pang:2015rwd}. The free energy scales as $N^{5/3}$ \cite{Aharony:2010af}. We now show that also the topologically twisted index scales in the 
same way.  As it happens for the $S^3$ matrix model \cite{Jafferis:2011zi,Fluder:2015eoa}, we find a consistent large $N$ limit whenever the constraints \eqref{globalJ} and \eqref{globalR} are satisfied.

The ansatz for the eigenvalue distribution is now,  as in \cite{Jafferis:2011zi,Fluder:2015eoa},
 \begin{equation}\label{ansatz N53}
 u^{(a)} (t) = N^{1/3} (i t + v(t))\, .
\end{equation}
The scaling is again dictated by the competition between the Chern-Simons terms, now with \eqref{CScontr2},  and the gauge and bi-fundamental contributions.

\subsection{Long-range forces}
Since the eigenvalue distribution is the same for all gauge groups, the long-range forces \eqref{N3} cancel automatically. We see that, differently from before, we can have a consistent large $N$ limit also in the case of  chiral quivers. We also need  to cancel the long-range forces  \eqref{N2ang}.  They compensate each other  if condition  \eqref{no long-range-forces Bethe0}  is satisfied.
Since the eigenvalues are the same for all groups, it is actually  enough to sum over nodes and we obtain the milder constraint  \eqref{globalJ} on the flavor charges:
\begin{equation} \Tr J =0\, ,
\end{equation}
where the trace is taken over bi-fundamental fermions in the quiver.

We obtain similar conditions by looking at the scaling of the twisted index.
As  in Section \ref{proof vanishing forces},  vector multiplets and chiral bi-fundamental multiplets contribute terms \eqref{indexscalingg} and \eqref{indexscalingb} which are of order $\cO(N^{7/3})$. They compensate each other  if condition \eqref{no long-range forces} is satisfied.
Since the eigenvalues are the same for all groups, it is again   enough to sum over nodes and we obtain the  constraint \eqref{globalR} on the flavor magnetic fluxes:
\begin{equation}
\Tr R =|G| + \sum_{I} \left(\fn_I - 1\right) = 0\, ,
\end{equation}
where the trace is taken over bi-fundamental fermions and gauginos in the quiver. 

Conditions $\Tr R=\Tr J=0$ are certainly satisfied for all quivers with a four-dimensional parent, even the chiral ones.

\subsection{Bethe potential at large $N$}

\begin{enumerate}
 \item Each group $a$ with CS level $k_a$ contributes
 \begin{equation}
  {\quad \rule[-1.4em]{0pt}{3.4em} - i k_a N^{5/3} \int dt\, \rho(t)\, t\, v(t) + \frac{k_a}{2} N^{5/3} \int dt\, \rho(t)\, \left(t^2 - v(t)^2\right)\, . \quad}
 \end{equation}
 \item A  bi-fundamental field with chemical potential $\Delta_I$  contributes
 \begin{equation}
  {\quad \rule[-1.4em]{0pt}{3.4em} i  \, g_+\left(\Delta_{I}\right) \, N^{5/3} \int dt\, \frac{\rho(t)^2}{1-i v'(t)} \, . \quad}
 \end{equation}
 \item A fundamental field contributes
 \begin{equation}
  {\quad \rule[-1.4em]{0pt}{3.4em} -\frac{1}{4} N^{5/3} \int dt\, \rho(t)\, \sign(t) \left[ i t + v(t) \right]^2\, , \quad}
 \end{equation}
 while an anti-fundamental field contributes
 \begin{equation}
  {\quad \rule[-1.4em]{0pt}{3.4em} \frac{1}{4} N^{5/3} \int dt\, \rho(t)\, \sign(t) \left[ i t + v(t) \right]^2 \, . \quad}
 \end{equation}
\end{enumerate}

\subsection{The index at large $N$}
\begin{enumerate}
 \item For each group $a$, the contribution of the Vandermonde determinant is
 \begin{equation}
  {\quad \rule[-1.4em]{0pt}{3.4em} -\frac{\pi^2}{3} N^{5/3} \int dt\, \frac{\rho(t)^2}{1-i v'(t)}\, . \quad}
 \end{equation}
 \item  A chiral  bi-fundamental field, with chemical potential $\Delta_I$ and magnetic flux $\fn_I$ contributes
 \begin{equation}
  {\quad \rule[-1.4em]{0pt}{3.4em} - (\fn_{I}-1)\, g'_+ \left(\Delta_{I}\right) N^{5/3} \int dt\, \frac{\rho(t)^2}{1-i v'(t)} \, . \quad}
 \end{equation}
 Fundamental fields do not contribute to the index explicitly.
\end{enumerate}
Notice that the relation with the $S^3$ free energy discussed in Section \ref{sec:freeenergy} and the {\em index theorem} of Section \ref{sec:index} also hold for this class of quiver gauge theories.\footnote{The coefficient $2/3$ in front of $\mu$ in Eq.\,\eqref{virial theorem} must be replaced by $3/5$.}

\section{Discussion and Conclusions}\label{discussion}

In this paper we have studied the large $N$ behavior of the topologically twisted index for ${\cal N}=2$ gauge theories in three dimensions. We have focused on theories  with a conjectured M-theory or massive type IIA dual and examined the corresponding field theory phases, where holography predicts a $N^{3/2}$ or $N^{5/3}$ scaling for the path integral, respectively. We correctly reproduced this scaling  for a class of ${\cal N}=2$ theories and we also uncovered some surprising relations with apparently
different physical quantities.

The first surprise comes from the identification of the {\it Bethe potential} $\wb\cV$ with the {\it $S^3$ free energy} $F$ of the same  ${\cal N}=2$ gauge theory. Recall that, in our
approach, the BAE and the Bethe potential are auxiliary quantities determining the position of the poles in the matrix model in the large $N$ limit.  $\wb\cV$ depends on the chemical potentials for the flavor symmetries, satisfying \eqref{2pi}, while $F$ depends on trial R-charges, which parameterize the curvature couplings of the theory on $S^3$.
Both quantities, $\cV$ and $F$ are determined in terms of a matrix model (auxiliary in the case of $\cV$). The two matrix models, and the corresponding equations of motion
are different for finite $N$, but, quite remarkably become indistinguishable in the large $N$ limit. Also the conditions to be imposed on the quiver for the existence of a  $N^{3/2}$ or $N^{5/3}$ scaling are the same. Although the structure of the long-range forces and the mechanism for their cancellation is different, they rule out
quivers with chiral bi-fundamentals in the M-theory phase and impose the same conditions on flavor symmetries. 

This identification leads to a relation of the Bethe potential $\wb\cV$ with the volume functional of Sasaki-Einstein manifolds.
The exact R-symmetry of a superconformal  ${\cal N}=2$ gauge theory can be found by extremizing $F(\Delta_I)$ with respect to the trial R-charges $\Delta_I$ \cite{Jafferis:2011zi}, but $F(\Delta_I)$ makes sense for arbitrary $\Delta_I$. The functional $F(\Delta_I)$ has a well-defined geometrical meaning for theories with an AdS$_4\times Y_7$ dual, where $Y_7$ is a Sasaki-Einstein manifold. The  value of $F$  upon extremization is related to the (square root of the) volume of $Y_7$. More generally, at least for a class of quivers corresponding to ${\cal N}=3$ and toric cones $C(Y_7)$, the  value of $F(\Delta_I)$, as a function of the trial R-symmetry parameterized by $\Delta_I$,  has been matched with the (square root of the) volume of a family of Sasakian deformation of $Y_7$, as a function of the Reeb vector. For toric theories, the volume can be parameterized in terms of a set of charges $\Delta_I$, that encode how the R-symmetry  varies  with the Reeb vector, and it has been conjectured in \cite{Amariti:2011uw,Amariti:2012tj,Lee:2014rca}
to be a homogeneous quartic function of the $\Delta_I$, in agreement with the homogeneity properties of $\wb\cV$ and $F$. 
  
A second intriguing relation comes from the index theorem \eqref{Z large N conjecture}. The original reason for studying the large $N$ limit of the topologically twisted index comes from
the counting of AdS$_4$ black holes microstates. The entropy of magnetically charged black holes asymptotic to AdS$_4\times S^7$ was successfully compared with
the large $N$ limit of the index in  \cite{Benini:2015eyy}, when extremized with respect to the chemical potential $\Delta_I$. We expect that a similar relation holds for magnetically charged BPS black holes asymptotic to AdS$_4\times Y_7$,
for a generic Sasaki-Einstein manifold. Given the very small number of black holes known, this statement is difficult to check. Assuming, however, that it is true, we can 
compare the on-shell index of  a superconformal  ${\cal N}=2$ gauge theory dual to AdS$_4\times Y_7$ twisted by a set of magnetic charges $\fn_I$ with the entropy of a black hole in AdS$_4\times Y_7$ supported by the same magnetic charges. The entropy of such black hole is determined in supergravity by the attractor mechanism \cite{Ferrara:1996dd}.  The black
hole can be written as a solution of the ${\cal N}=2$ gauged supergravity obtained by truncating the KK spectrum on $Y_7$ to a consistent set of modes, which contains
vector and hypermultiplets \cite{Cacciatori:2009iz,Dall'Agata:2010gj,Hristov:2010ri,Katmadas:2014faa,Halmagyi:2014qza}. In a gauged supergravity with only vectors, the entropy of the black hole can be obtained  by extremizing with respect to $X^I$ the quantity \cite{Dall'Agata:2010gj}
\begin{equation}\label{attractor}
\cI(X^I) = - \sum_I  \fn^I \frac{\partial  \cal F}{\partial X^I} \, ,
\end{equation}
where ${\cal F}(X^I)$ is the supergravity prepotential and $X^I$ a set of covariantly-constant homogeneous holomorphic sections. Here, we are working in the gauge $\sum_I g_I X^I=1$, where $g_I$ are the electric gaugings of the theory and we assume that there are no magnetic ones.  The presence of hypermultiplets
just add algebraic constraints \cite{Halmagyi:2013sla,Klemm:2016wng}. Comparison of the attractor equation \eqref{attractor} with the index theorem \eqref{Z large N conjecture2}
suggests the identification of $\Delta_I$ with $X^I$ and a proportionality between  $\wb\cV(\Delta_I)$ and ${\cal F}(X^I)$, valid also before extremization. This proportionality
certainly holds for ABJM since the prepotential is
\begin{equation}
{\cal F} = i \sqrt{ X^0 X^1 X^2 X^3}\, ,
\end{equation} 
which can be clearly mapped to  \eqref{Vsolution sum 2pi -- end}. It would be quite interesting to see how to formulate this identification in more general theories with
hypermultiplets.

We thus see an intriguing chain of identifications
\\

{\it Bethe potential} $\wb\cV \qquad  \Longleftrightarrow \qquad S^3$ {\it free-energy} $F \qquad \Longleftrightarrow \qquad $ {\it prepotential} ${\cal F}$
\\

\noindent of functionals depending on chemical potentials, trial R-charges and bulk scalar fields, respectively. 
A relation between the free energy $F$ and the prepotential of the compactified theory was already suggested in \cite{Lee:2014rca}. 
This chain of identifications certainly calls for further investigation.\footnote{As well as the relation to other extremization problems  and generalizations \cite{Amariti:2015ybz,Hristov:2016vbm}.}

 The main motivation of our analysis comes certainly from the attempt to extend the result of  \cite{Benini:2015eyy} to a larger class of black holes. The difficulty of doing so
 is mainly the exiguous number of  existing black holes solutions with an M-theory lift. Few numerical examples are known in Sasaki-Einstein compactifications \cite{Halmagyi:2013sla}, 
 mostly having Betti multiplets as massless vectors. Some interesting examples involves chiral quivers and are therefore outside the range of our 
 technical abilities at the moment. It is curious that  apparently well-defined chiral quivers, which passed quite nontrivial checks \cite{Benini:2011cma}, have an ill-defined 
large $N$ limit both for the $S^3$ free energy and the topologically twisted index in the M-theory phase. It would be quite interesting to know whether  this is just a technical problem and another saddle-point with $N^{3/2}$ scaling exists, or the models are really ruled out.  

It would be also  quite interesting to find new examples of AdS$_4$ M-theory and massive type IIA  black holes directly in eleven or ten dimensions (see, for example, \cite{Katmadas:2015ima}) or in some other consistent truncations of eleven dimensional supergravity where to test our results. 

We hope to come back to all these questions quite soon.

\section*{Acknowledgements}

We would like to thank Alessandra Gnecchi,  Anton Nedelin  and especially Fancesco Benini and Kiril Hristov for useful discussions. We thank Noppadol Mekareeya for many useful discussions and collaboration on a related project. AZ is  grateful to CERN for hospitality and its partial support during the completion of this work. 
AZ is supported by the INFN and the MIUR-FIRB grant RBFR10QS5J ``String Theory and Fundamental Interactions''.
SMH is supported in part by INFN.

\appendix

\section{Derivation of general rules for theories with $N^{3/2}$ scaling of the index}
\label{general rules N32}

In this appendix we give a detail derivation of the rules presented in the main text for finding the Bethe potential and the index at large $N$.
The result is a straightforward generalization of \cite{Benini:2015eyy}.

We consider the following large $N$ saddle-point eigenvalue distribution ansatz
\be\label{ansatz alpha}
u_i^{(a)} = i N^{\alpha} t_i + v_i^{(a)} \, .
\ee
Notice that the imaginary parts of the $u_i^{(a)}$ are equal. We also define
\be
\delta v_i = v_i^{(b)} - v_i^{(a)} \, .
\ee
In the large $N$ limit, we define the continuous functions $t_i = t(i/N)$ and $v_i^{(a)}= v^{(a)}(i/N)$ and  we introduce the density of eigenvalues 
\be \rho(t) = \frac1N\, \frac{di}{dt}\, , \ee
normalized as $ \int dt\, \rho(t) = 1$. 
Furthermore, we impose the additional constraint
\be
\sum_{a=1}^{|G|} k_a = 0 \, ,
\ee
corresponding to quivers dual to M-theory on AdS$_4\times Y_7$ and $N^{3/2}$ scaling. 

\subsection{Bethe potential at large $N$}
\label{Bethe potential rules N32}

We may write the Bethe ansatz equations as
\begin{align}\label{log BAEs2}
 0 & = \log \left[ \text{RHS of \eqref{BA expression}} \right] - 2 \pi i n_i^{(a)} \, ,
\end{align}
where $n_i^{(a)}$ are integers that parameterize the angular ambiguities.
We define the ``Bethe potential'' as the function whose critical points gives the BAEs \eqref{log BAEs2}.
In  the large $N$ limit the Bethe potential $\cV$ will be the sum of various contributions,
\be
\cV = \cV^{\text{CS}} + \cV^{\text{bi-fund}} + \cV^{\text{adjoint}} + \cV^{\text{(anti-)fund}} \, .
\ee
$\alpha$ will be determined to be $1/2$ by the competition between Chern-Simons terms and matter contribution.

\subsubsection{Chern-Simons contribution}
\label{Bethe CS N32}

Each group $a$ with CS level $k_a$ and topological chemical potential $\Delta_m^{(a)}$, contributes to the  finite $N$ Bethe potential as
\begin{equation}
 \cV^{\text{CS}} = \sum_{i=1}^N  \left[- \frac {k_a}{2} \left(u_i^{(a)}\right)^2 - \Delta_m^{(a)} u_i^{(a)} \right] \, .
\end{equation}
Given the large $N$ saddle-point eigenvalue distribution \eqref{ansatz alpha}, we find
\begin{equation}
 \cV^{\text{CS}} = \frac{k_a}{2} N^{2\alpha} \sum_{i=1}^{N} t_i^2 - i N^{\alpha} \sum_{i=1}^{N} \left(k_a t_i v_i^{(a)} + \Delta_m^{(a)} t_i \right) \, .
\end{equation}
Summing over nodes the first term vanishes (since $\sum_{a=1}^{|G|} k_a = 0$).
Taking the continuum limit, we obtain
\begin{equation}
 \cV^{\text{CS}} = - i k_a N^{1+\alpha} \int dt\, \rho(t)\, t\, v_a(t) - i N^{1+\alpha} \Delta_m^{(a)} \int dt\, \rho(t)\, t\, .
\end{equation}

\subsubsection{Bi-fundamental contribution}
\label{Bethe bi-fundamentals N32}

For a pair of bi-fundamental fields, one with chemical potential $\Delta_{(a,b)}$ transforming in the $({\bf N},\overline{\bf N})$
of $\U(N)_a \times \U(N)_b$ and one with chemical potential $\Delta_{(b,a)}$ transforming in the $(\overline{\bf N},{\bf N})$ of $\U(N)_a \times \U(N)_b$, the finite $N$  contribution to the Bethe potential is given by
\begin{align}\label{cV bi-fundamentals}
 \cV^{\text{bi-fund}} & = \sum_{\substack{\text{bi-fundamentals} \\ (b,a) \text{ and } (a,b)}} \sum_{i,j=1}^N \left[ \Li_2 \left( e^{i \left(u_j^{(b)} - u_i^{(a)} + \Delta_{(b,a)}\right)} \right) - \Li_2 \left( e^{i \left(u_j^{(b)} - u_i^{(a)} - \Delta_{(a,b)}\right)} \right) \right]  \nn \\
 & - \sum_{\substack{\text{bi-fundamentals} \\ (b,a) \text{ and } (a,b)}} \sum_{i,j=1}^N  \left[ \frac{\left( \Delta_{(b,a)} - \pi \right) + \left( \Delta_{(a,b)} - \pi \right) }{2} \left( u_j^{(b)} - u_i^{(a)} \right) \right] \, ,
\end{align}
up to constants that do not depend on $u_j^{(b)}$, $u_i^{(a)}$.

We would like to remind the reader that all angular variables are defined modulo $2\pi$. Part of the ambiguity in $\Delta_I$ can be fixed by requiring that
\be
\label{inequalities for delta v}
0 < \delta v + \Delta_{(b,a)} < 2\pi \;,\qquad\qquad\qquad -2\pi < \delta v - \Delta_{(a,b)} < 0 \;.
\ee
The remaining ambiguity of simultaneous shifts $\delta v \to \delta v+2\pi$, $\Delta_{(a,b)} \to \Delta_{(a,b)} +2\pi$, $\Delta_{(b,a)} \to \Delta_{(b,a)}-2\pi$ can also be fixed by requiring that $\delta v(t)$ takes the value $0$ somewhere, if it vanishes at all, which we assume.
We then have
\be
\label{general range of Delta_a}
0 < \Delta_I < 2\pi \, .
\ee

To compute $\cV^{\text{bi-fund}}$, we break
\begin{align}
\label{broken expression}
\sum_{i,j=1}^N \Li_2\left( e^{i \left(u_j^{(b)} - u_i^{(a)}+ \Delta_{(b,a)}\right)} \right) & = \sum_{i>j} \Li_2\left( e^{i \left(u_j^{(b)} - u_i^{(a)} + \Delta_{(b,a)}\right)} \right) + \sum_{i<j} \Li_2\left( e^{i \left( u_j^{(b)} - u_i^{(a)} + \Delta_{(b,a)}\right)} \right) \nn \\
& + \sum_{i=1}^N \Li_2\left( e^{i \left(u_i^{(b)} - u_i^{(a)} + \Delta_{(b,a)}\right)} \right) \, .
\end{align}
The crucial point here is that the last term is naively of $\cO(N)$ and thus subleading; however, we should keep it since its derivative is not subleading on part of the solution when $\delta v$ hits $\Delta_{(a,b)}$ or $-\Delta_{(b,a)}$.
Therefore, we keep
\be\label{talis eom}
N \int dt\, \rho(t) \left[\Li_2 \left( e^{i \left( \smallstrut \delta v(t) + \Delta_{(b,a)} \right)} \right) - \Li_2 \left( e^{i \left( \smallstrut \delta v(t) - \Delta_{(a,b)} \right)} \right) \right] \; .
\ee
This will be important in  the {\em tail contribution} to the Bethe potential.
The second term in \eqref{broken expression} is
\be
\sum_{i<j} \Li_2\left( e^{i \left(u_j^{(b)} - u_i^{(a)} + \Delta_{(b,a)} \right)} \right) = N^2 \int dt\, \rho(t) \int_t dt'\, \rho(t')\, \Li_2 \left( e^{i \left( \smallstrut u_b(t') - u_a(t) + \Delta_{(b,a)} \right)} \right) \;.
\ee
We first write the dilogarithm function as a power series, \ie,
\be
\Li_2(e^{iu}) = \sum_{k=1}^\infty \frac{e^{iku}}{k^2} \, .
\ee
Then, we consider the integral
\be
I_k = \int_t dt'\, \rho(t')\, e^{ik \left( \smallstrut u_b(t') - u_a(t) + \Delta_{(b,a)} \right)} = \int_t dt'\, e^{-kN^\alpha (t'-t)} \sum_{j=0}^\infty \frac{(t'-t)^j}{j!} \partial_x^j \left[ \rho(x)\, e^{ik \left( \smallstrut v_b(x) - v_a(t) + \Delta_{(b,a)} \right)} \right]_{x=t} \;, \nonumber
\ee
where in the second equality we have Taylor-expanded the integrand around the lower bound.
Doing the integration over $t'$ we see that the leading contribution is for $j=0$, thus
\be
I_k = \frac{\rho(t)\, e^{ik \left( \smallstrut v_b(t) - v_a(t) + \Delta_{(b,a)} \right)} }{ k N^\alpha} + \cO(N^{-2\alpha}) \, .
\ee
Substituting we find
\be
\label{first summation}
\sum_{i<j} \Li_2 \left( e^{i \left( u_j^{(b)} - u_i^{(a)} + \Delta_{(b,a)} \right)} \right) = N^{2-\alpha} \int dt\, \Li_3 \left( e^{i \left( \smallstrut \delta v(t) + \Delta_{(b,a)} \right)} \right)\, \rho(t)^2 + \cO(N^{2-2\alpha}) \;.
\ee
Next, we need to compute the first term in \eqref{broken expression}.
In order for the integral to be localized at the boundary, we need to invert the integrand.
Since $0<\re \left(u_j^{(b)} - u_i^{(a)} + \Delta_{(b,a)}\right)<2\pi$: 
\begin{align}
\label{reflection applied 34}
\Li_2 \left( e^{i \left( u_j^{(b)} - u_i^{(a)} + \Delta_{(b,a)} \right)} \right) & = - \Li_2 \left( e^{i \left( u_i^{(a)} - u_j^{(b)} - \Delta_{(b,a)} \right)} \right) + \frac{ \left(u_j^{(b)} - u_i^{(a)} + \Delta_{(b,a)}\right)^2}2 \nn \\
& - \pi \left(u_j^{(b)} - u_i^{(a)} + \Delta_{(b,a)}\right) + \frac{\pi^2}3 \;.
\end{align}
The summation $\sum_{i>j}$ of the first term in the latter expression is similar to \eqref{first summation} but with $-\Li_3\left( e^{-i \left( \delta v(t) + \Delta_{(b,a)} \right)} \right)$ instead of $\Li_3$.
The two contributions may then be combined, using \eqref{reflection formulae},
\begin{align}
& N^{2-\alpha} \int dt\,  \left[\Li_3 \left( e^{i \left( \smallstrut \delta v(t) + \Delta_{(b,a)} \right)} \right) - \Li_3\left( e^{-i \left( \delta v(t) + \Delta_{(b,a)} \right)} \right) \right]\, \rho(t)^2 \nn \\
& = i N^{2-\alpha} \int dt\,  g_+\left( \smallstrut \delta v(t) + \Delta_{(b,a)} \right) \, \rho(t)^2\, ,
\end{align}
where we have introduced the polynomial function $g_+(u)$ defined in  Eq.\,\eqref{gp gm}.

The second term in the first line of (\ref{cV bi-fundamentals}) can be treated similarly.
We now have $-2\pi< \re (u_j^{(b)} - u_i^{(a)} - \Delta_{(a,b)}) < 0$ and
\begin{align}
\label{reflection applied 12}
- \Li_2 \left( e^{i \left(u_j^{(b)} - u_i^{(a)} - \Delta_{(a,b)}\right)} \right) & = \Li_2 \left( e^{i \left(u_i^{(a)} - u_j^{(b)} + \Delta_{(a,b)} \right)} \right) - \frac{\left(u_j^{(b)} - u_i^{(a)} - \Delta_{(a,b)}\right)^2}2 \nn \\
& - \pi \left(u_j^{(b)} - u_i^{(a)} - \Delta_{(a,b)} \right) - \frac{\pi^2}3 \, .
\end{align}
As before, the result of the summation $\sum_{i>j}$ together with that of $\sum_{i<j}$ yields a cubic polynomial expression
\begin{align}
&  - i N^{2-\alpha} \int dt\,  g_-\left( \smallstrut \delta v(t) - \Delta_{(a,b)} \right) \, \rho(t)^2\, ,
\end{align}
where $g_-(u)$ is defined in  Eq.\,\eqref{gp gm}.

The left over terms from \eqref{reflection applied 34} and \eqref{reflection applied 12}, throwing away the constants which do not affect the critical points, are
\be\label{bi-fundamentals forces}
 \left[ \left( \Delta_{(a,b)} - \pi \right) + \left( \Delta_{(b,a)} - \pi \right) \right] \sum_{i>j} \left( u_j^{(b)} - u_i^{(a)}\right) \, ,
\ee
which, combined with the second line in \eqref{cV bi-fundamentals}, gives
\be\label{bi-fundamentals forces2}
 \frac{1}{2} \left[ \left( \Delta_{(a,b)} - \pi \right) + \left( \Delta_{(b,a)} - \pi \right) \right] \sum_{i\neq j} \left( u_j^{(b)} - u_i^{(a)}\right) \text{sign} (i-j)\, .
\ee
This term can be precisely canceled by
\begin{equation}
 - 2 \pi \sum_{i=1}^N \left( n_i^{(b)} u_i^{(b)} - n_i^{(a)} u_i^{(a)} \right) \, ,
\end{equation}
provided that $\sum_{I \in a} \Delta_I \in 2 \pi \bZ $.\footnote{When $N \in 2 \bZ_{\geq 0} +1 $. For even $N$  one can include an extra $(-1)^{\fm}$ in the twisted partition function, which can be reabsorbed in the definition of the topological fugacity $\xi$, to compensate the overall factor of $\pi$.}

Notice that  a single bi-fundamental chiral multiplet, with chemical potential $\Delta_{(b,a)}$,
transforming in the representation $(\overline{\bf N},{\bf N})$ of $\U(N)_a \times \U(N)_b$ contributes to the Bethe potential as
\begin{equation}\label{single bifund Bethe}
 \sum_{i,j=1}^N \left[ \Li_2 \left( e^{i \left( u_j^{(b)} - u_i^{(a)} + \Delta_{(b,a)} \right)} \right) - \frac{\left( u_j^{(b)} - u_i^{(a)} + \Delta_{(b,a)} \right)^2}{4} \right] \, .
\end{equation}
Using Eq.\,\eqref{reflection applied 34} we find the following long-range terms
\be\label{forces chiral}
\sum_{i<j} \frac{\left(u_i^{(a)} - u_j^{(b)}\right)^2}{4} - \sum_{i<j} \frac{\left(u_j^{(a)} - u_i^{(b)}\right)^2}{4}\, .
\ee
In the large N limit, they are of order $N^{5/2}$ and cannot be canceled  for {\it chiral quivers}.

To find a nontrivial saddle-point the leading terms of order $N^{1+\alpha}$ and $N^{2-\alpha}$ have to be of the same order, so we need $\alpha = 1/2$.
Putting everything together we arrive at the final expression for the large $N$ contribution of the bi-fundamental fields to the Bethe potential
\begin{equation}
 \cV^{\text{bi-fund}} = i N^{3/2} \sum_{\substack{\text{bi-fundamentals} \\ (b,a) \text{ and } (a,b)}} \int dt\, \rho(t)^2 \left[g_+\left(\delta v(t) + \Delta_{(b,a)}\right) - g_-\left(\delta v(t) - \Delta_{(a,b)}\right)\right]\, .
\end{equation}

In the sum over pairs of bi-fundamental fields $(b,a)$ and $(a,b)$, adjoint fields should be counted once and should come with an explicit factor of $1/2$.
Keeping this in mind and setting
\begin{equation}
v_b = v_a \, , \qquad \qquad  \Delta_{(b,a)} =  \Delta_{(a,b)} =  \Delta_{(a,a)} \, ,
\end{equation}
we find the contribution of fields transforming in the adjoint of the $a$th gauge group with chemical potential $\Delta_{(a,a)}$ to the large $N$ Bethe potential,
\begin{equation}
 \cV^{\text{adjoint}} = i N^{3/2} \sum_{\substack{\text{adjoint} \\ (a,a) }} g_+\left(\Delta_{(a,a)}\right) \int dt\, \rho(t)^2 \, .
\end{equation}

\subsubsection{Fundamental and anti-fundamental contribution}
\label{Bethe fundamentals N32}

The fundamental and anti-fundamental fields contribute to the large $N$  Bethe potential as\footnote{Up to a factor $-\pi (\tilde n_a -n_a) u_i/2$ that cancels at this order  for 
total number of fundamentals equal to total number of anti-fundamentals, which we will need to assume for consistency.} 
\begin{align}
 \cV^{\text{(anti-)fund}} & =
 \sum_{i=1}^{N} \left[ \sum_{\substack{\text{anti-fundamental} \\ a }} \Li_2 \left(e^{i\left(-u_i ^{(a)} + \tilde \Delta_a \right)}\right) - \sum_{\substack{\text{fundamental} \\ a }} \Li_2 \left(e^{i\left(-u_i ^{(a)} - \Delta_a\right)}\right) \right]\nn\\
 & + \frac{1}{2} \sum_{i=1}^{N} \left[ \sum_{\substack{\text{anti-fundamental} \\ a }} \big( \tilde \Delta_a - \pi \big) u_i^{(a)} + \sum_{\substack{\text{fundamental} \\ a }} \big( \Delta_a - \pi \big) u_i^{(a)}\right] \nn \\
 & - \frac{1}{4} \sum_{i=1}^{N} \left[\sum_{\substack{\text{anti-fundamental} \\ a }} \left(u_i^{(a)}\right)^2
 - \sum_{\substack{\text{fundamental} \\ a }} \left(u_i^{(a)}\right)^2 \right] \, .
\end{align}
Let us denote the total number of (anti-)fundamental fields by $(\tilde n_a)\, n_a$.
Substituting in $\cV^{\text{(anti-)fund}}$ the ansatz \eqref{ansatz alpha} and taking the continuum limit, the first line contributes
\begin{align}\label{fund Li2}
 & - \frac{\left( \tilde n_a - n_a \right)}{2} N^{2} \int_{t>0} dt\, \rho(t)\, t^2 \nn \\
 & + i N^{3/2} \int_{t>0} dt\, \rho(t)\, t\, \left\{ \sum_{\substack{\text{anti-fundamental} \\ a }} \Big[ v_a(t) - \big( \tilde \Delta_{a} - \pi \big) \Big] - \sum_{\substack{\text{fundamental} \\ a }} \Big[ v_a(t) + \big( \Delta_{a} - \pi \big) \Big] \right\} \, ,
\end{align}
while the second and the third lines give
\begin{align}\label{fund linear}
 & \frac{\left( \tilde n_a - n_a \right)}{4} N^{2} \int dt\, \rho(t)\, t^2 \nn \\
 & - \frac{i}{2} N^{3/2} \int dt\, \rho(t)\, t\, \left\{ \sum_{\substack{\text{anti-fundamental} \\ a }} \Big[ v_a(t) - \big( \tilde \Delta_{a} - \pi \big) \Big] - \sum_{\substack{\text{fundamental} \\ a }} \Big[ v_a(t) + \big( \Delta_{a} - \pi \big) \Big] \right\} \, . %
\end{align}
Combining Eq.\,\eqref{fund Li2} and Eq.\,\eqref{fund linear}, we obtain
\begin{align}\label{fundcontr}
 \cV^{\text{(anti-)fund}} & = - \frac{\left( \tilde n_a - n_a \right)}{4} N^{2} \int dt\, \rho(t)\, t\, |t| \nn \\
 & + \frac{i}{2} N^{3/2} \sum_{\substack{\text{anti-fundamental} \\ a }} \int dt\, \rho(t)\, |t|\, \Big[ v_a(t) - \big( \tilde \Delta_{a} - \pi \big) \Big] \nn \\
 & - \frac{i}{2} N^{3/2} \sum_{\substack{\text{fundamental} \\ a }} \int dt\, \rho(t)\, |t|\, \Big[ v_a(t) + \big( \Delta_{a} - \pi \big) \Big] \, .
\end{align}
Summing over nodes the first term vanishes, demanding that
\be
\sum_{a=1}^{|G|} \left( \tilde n_a - n_a \right) = 0 \, .
\ee
We see that we need to consider quivers where the {\it total} number of fundamentals equal the {\it total} number of anti-fundamentals.
For each single node this number can be different. 

\subsection{The index at large $N$}
\label{entropy rules N32}

We are interested in the large $N$ limit of the logarithm of the twisted partition function.

\subsubsection{Gauge vector contribution}
\label{entropy CS N32}

Given the expression for the matrix model  in Section \ref{index}, the Vandermonde determinant contributes to the logarithm of the index  as
\begin{align}\label{gauge entropy}
 \log \prod_{i \neq j} \left( 1- \frac{x_i^{(a)}}{x_j^{(a)}} \right) & =
 \log \prod_{i < j} \left( 1- \frac{x_j^{(a)}}{x_i^{(a)}} \right)^2 \left( - \frac{x_i^{(a)}}{x_j^{(a)}} \right) \nn \\
 & = i \sum_{i<j}^N \left( u_i^{(a)} - u_j^{(a)} + \pi \right)
 - 2 \sum_{i<j}^N \Li_{1} \left(e^{i \left( u_j^{(a)} - u_i^{(a)}\right)}\right) \, .
\end{align}
The first term is of $\cO(N^2)$ and, therefore, a source of the long-range forces and will be  canceled by the contribution coming from the chiral multiplets.
The second term is treated as in Appendix \ref{Bethe bi-fundamentals N32}, and gives
\be
 \re \log Z^{\text{gauge}} = - \frac{\pi^2}{3} N^{3/2}  \int dt\, \rho(t)^2 + \cO(N)  \, .
\ee

\subsubsection{Topological symmetry contribution}
\label{entropy topological N32}

A $\U(1)_a$ topological symmetry with flux $\ft_a$ contributes  as
\be\label{top entropy}
i \sum_{i=1}^{N} u_i^{(a)} \ft_a \, .
\ee
In the continuum limit, we get
\be
 \re \log Z^{\text{top}} = - \ft_a N^{3/2} \int dt\, \rho(t)\, t + \cO(N)  \, .
\ee

\subsubsection{Bi-fundamental contribution}
\label{entropy bi-fundamentals N32}

We can rewrite the contribution  to the twisted index of a bi-fundamental chiral multiplet transforming in the $(\overline{\bf N}, {\bf N})$ of $\U(N)_a \times \U(N)_b$, with magnetic flux $\fn_{(b,a)}$ and chemical potential $\Delta_{(b,a)}$  as:\footnote{The phases can be neglected, as we will be interested in $\log |Z|$.}
\begin{align}
 & \prod_{i=1}^{N} \left( \frac{x_i^{(b)}}{x_i^{(a)}} \right)^{\frac{1}{2} \left( \fn_{(b,a)} - 1\right)}
 \left( 1 - y_{(b,a)} \frac{x_i^{(b)}}{x_i^{(a)}} \right)^{\fn_{(b,a)} - 1} \times \nn \\
 & \prod_{i<j}^{N} (-1)^{\fn_{(a,b)} - 1} \left( \frac{x_i^{(a)} x_i^{(b)}}{x_j^{(a)} x_j^{(b)}} \right)^{\frac{1}{2} \left(\fn_{(b,a)} - 1\right)}
 \left( 1 - y_{(b,a)} \frac{x_j^{(b)}}{x_i^{(a)}} \right)^{\fn_{(b,a)}-1} \left( 1 - y_{(b,a)}^{-1} \frac{x_j^{(a)}}{x_i^{(b)}} \right)^{\fn_{(b,a)}-1} \, .
\end{align}
The first term in $\prod_{i}$ is subleading and the second term only contributes in the tail where $\delta v \approx - \Delta_{(b,a)}$,
\begin{align}
 & N \left(\fn_{(b,a)} - 1\right) \int dt\, \rho(t) \log \left( 1-e^{i\left(\delta v + \Delta_{(b,a)}\right)} \right) \nn \\
 & = - N^{3/2} \; \left(\fn_{(b,a)} - 1\right) \int_{\delta v \,\approx\, - \Delta_{(b,a)}} \hspace{-3em} dt\, \rho(t) \, Y_{(b,a)}(t) + \cO(N)
\end{align}
The first two terms in $\prod_{i<j}$ give a long-range force contribution to the index
\begin{equation}\label{forces chiral ba}
 \frac{i}{2} \left( \fn_{(b,a)} - 1 \right) \sum_{i<j} \left[ \left( u_i^{(a)} - u_j^{(a)} + \pi \right) + \left( u_i^{(b)} - u_j^{(b)} + \pi \right) \right] \, ,
\end{equation}
while the last two terms result in
\begin{align}
&  - N^{3/2} \left(\fn_{(b,a)}-1\right) \int dt\, \rho(t)^2 \left[ \Li_2\left( e^{i\left(\delta v+\Delta_{(b,a)}\right)} \right) + \Li_2 \left( e^{-i \left(\delta v + \Delta_{(b,a)}\right)} \right) \right] + \cO(N) \nn \\
 & = - N^{3/2} \left(\fn_{(b,a)}-1\right) \int dt\, \rho(t)^2 g_+' \left( \delta v(t) + \Delta_{(b,a)} \right) + \cO(N) \, .
\end{align}
A bi-fundamental field transforming in the $({\bf N}, \overline{\bf N})$ of $\U(N)_a \times \U(N)_b$, with magnetic flux $\fn_{(a,b)}$ and chemical potential $\Delta_{(a,b)}$ gives
the same contribution with the replacement $a\leftrightarrow b$ and $\delta v \rightarrow -\delta v$. 

The long-range force contribution of bi-fundamental fields at node $a$ cancels with the gauge contribution  in \eqref{gauge entropy},  provided that
\be
2 + \sum_{I \in a} (\fn_I - 1) = 0 \, ,
\ee
where the sum is taken over all chiral bi-fundamentals $I$ with an endpoint in $a$.

In picking the residues, we need to insert a Jacobian in the twisted index and evaluate everything else at the pole.
The matrix $\bB$ appearing in the Jacobian is $2 N \times 2 N$ with block form
\be
\label{Jacobian general}
\bB = \frac{\partial \big( e^{i B_j^{(a)}}, e^{i B_j^{(b)}} \big) }{ \partial( \log x_l^{(a)}, \log x_l^{(b)})} =
\mat{ x_l^{(a)} \dfrac{\partial e^{iB_j^{(a)}}}{\partial x_l^{(a)}} & x_l^{(b)} \dfrac{ \partial e^{i B_j^{(a)}}}{\partial x_l^{(b)}}  \\[1em]
x_l^{(a)} \dfrac{\partial e^{i B_j^{(b)}}}{\partial x_l^{(a)}} & x_l^{(b)} \dfrac{ \partial e^{i B_j^{(b)}}}{\partial x_l^{(b)}} }_{2 N \times 2 N} \, ,
\ee
and only contributes in the tails regions,\footnote{We refer the reader to \cite{Benini:2015eyy} for a detailed analysis of the Jacobian at large $N$.}%
\bea
-\log \det \bB = - N^{3/2} \sum_{\substack{\text{bi-fundamentals} \\ (b,a) \text{ and } (a,b)}}
 & \int_{\delta v \,\approx\, - \Delta_{(b,a)}} \hspace{-3em} dt\, \rho(t) \, Y_{(b,a)}(t) \\
 & + \int_{\delta v \,\approx\,\Delta_{(a,b)}} \hspace{-2em} dt\, \rho(t) \, Y_{(a,b)}(t) + \cO(N\log N) \, .
\eea

Summarizing, pairs of bi-fundamental fields contribute to the logarithm of the index  as
\begin{align}\label{indexbi}
 \re \log Z^{\text{bi-fund}}_{\text{bulk}} = - N^{3/2} \sum_{\substack{\text{bi-fundamentals} \\ (b,a) \text{ and } (a,b)}}
 & \int dt\, \rho(t)^2 \big[(\fn_{(b,a)}-1)\, g'_+ \left(\delta v(t) + \Delta_{(b,a)}\right) \nn \\
 & + (\fn_{(a,b)}-1)\, g'_- \left(\delta v(t) - \Delta_{(a,b)}\right)\big]\, .
\end{align}
The tails contribution is also given by
\begin{align}
 \re \log Z^{\text{bi-fund}}_{\text{talis}} = - N^{3/2} \sum_{\substack{\text{bi-fundamentals} \\ (b,a) \text{ and } (a,b)}}
 & \fn_{(b,a)}  \int_{\delta v \,\approx\, - \Delta_{(b,a)}} \hspace{-3em} dt\, \rho(t) \, Y_{(b,a)}(t) \nn \\
 & + \fn_{(a,b)} \int_{\delta v \,\approx\,\Delta_{(a,b)}} \hspace{-2em} dt\, \rho(t) \, Y_{(a,b)}(t) \, .
\end{align}

A field transforming in the adjoint of the $a$th gauge group with magnetic flux $\fn_{(a,a)}$ and chemical potential $\Delta_{(a,a)}$ only contributes to the bulk index.
To find its contribution we need to include an explicit factor of $1/2$ in the  expression (\ref{indexbi}) and take
\begin{equation}
v_b = v_a \, , \qquad \qquad  \Delta_{(b,a)} =  \Delta_{(a,b)} =  \Delta_{(a,a)} \, , \qquad \qquad  \fn_{(b,a)} =  \fn_{(a,b)} =  \fn_{(a,a)} \, .
\end{equation}

\subsubsection{Fundamental and anti-fundamental contribution}
\label{entropy fundamentals N32}

The fundamental and anti-fundamental fields contribute to the logarithm of the index  as
\begin{align}
 & \log \prod_{i=1}^{N} \prod_{\substack{\text{anti-fundamental} \\ a }} \left(x_i^{(a)}\right)^{\frac12 \left(\tilde \fn_a - 1\right)}
 \left[ 1 - \tilde y_a \left( x_i^{(a)} \right)^{-1} \right]^{\tilde \fn_a - 1} \times \nn \\
 & \prod_{\substack{\text{fundamental} \\ a }} \left(x_i^{(a)}\right)^{\frac12 \left( \fn_a - 1\right)}
 \left[ 1 - y_a^{-1} \left( x_i^{(a)} \right)^{-1} \right]^{\fn_a - 1} \, .
\end{align}
Using the scaling ansatz \eqref{ansatz alpha}, in the continuum limit we get
\begin{align}
 & \log \prod_{i=1}^N \prod_{\substack{\text{anti-fundamental} \\ a }} \left(x_i^{(a)}\right)^{\frac12 \left(\tilde \fn_a - 1\right)}
 \prod_{\substack{\text{fundamental} \\ a }} \left(x_i^{(a)}\right)^{\frac12 \left( \fn_a - 1\right)} \nn \\
 & = - \frac{1}{2} N^{3/2} \left[\sum_{\substack{\text{anti-fundamental} \\ a }} \left(\tilde \fn_a - 1\right)
 + \sum_{\substack{\text{fundamental} \\ a }} \left( \fn_a - 1\right) \right] \int dt\, \rho(t)\, t
 + \mathcal{O}(N) \, ,
\end{align}
and
\begin{align}
 &\log \prod_{i=1}^{N} \prod_{\substack{\text{anti-fundamental} \\ a }} \left[ 1 - \tilde y_a \left( x_i^{(a)} \right)^{-1} \right]^{\tilde \fn_a - 1}
 \prod_{\substack{\text{fundamental} \\ a }} \left[ 1 - y_a^{-1} \left( x_i^{(a)} \right)^{-1} \right]^{\fn_a - 1} \nn \\
 & = N^{3/2} \left[\sum_{\substack{\text{anti-fundamental} \\ a }} \left(\tilde \fn_a - 1\right)
 + \sum_{\substack{\text{fundamental} \\ a }} \left( \fn_a - 1\right) \right]
 \int_{t>0} dt\, \rho(t)\, t + \mathcal{O}(N) \, .
\end{align}
Putting the above equations together we find:
\begin{equation}
 \re \log Z^{\text{(anti-)fund}} = \frac{1}{2} N^{3/2} \left[\sum_{\substack{\text{anti-fundamental} \\ a }} \left(\tilde \fn_a - 1\right)
 + \sum_{\substack{\text{fundamental} \\ a }} \left( \fn_a - 1\right) \right] \int dt\, \rho(t)\, |t| \, .
\end{equation}

\section{An explicit example: the SPP theory}\label{SPP}

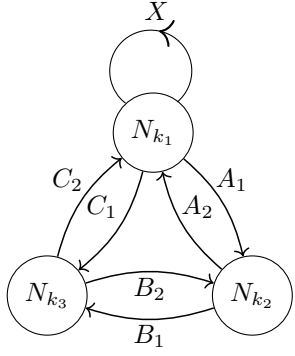
\begin{figure}[t]
\centering
\begin{tikzpicture}[font=\footnotesize, scale=0.9]
\begin{scope}[auto,
  every node/.style={draw, minimum size=0.5cm}, node distance=2cm];
\node[circle]  (UN)  at (0.35,1.7) {$N_{k_1}$};
\node[circle]  (UN2)  at (-1.2,-.7) {$N_{k_3}$};
\node[circle]  (UN3)  at (1.8,-.7) {$N_{k_2}$};
\end{scope}\draw[decoration={markings, mark=at position 0.45 with {\arrow[scale=2.0]{>}}}, postaction={decorate}, shorten >=0.7pt] (0.7,2.15) arc (-50:235:0.6cm);
\draw[draw=black,solid,line width=0.2mm,<-]  (UN) to[bend right=18] (UN2) ;
\draw[draw=black,solid,line width=0.2mm,->]  (UN) to[bend left=18]  (UN2) ;
\draw[draw=black,solid,line width=0.2mm,<-]  (UN) to[bend right=18] (UN3) ; 
\draw[draw=black,solid,line width=0.2mm,->]  (UN) to[bend left=18]  (UN3) ;  
\draw[draw=black,solid,line width=0.2mm,<-]  (UN2) to[bend right=18](UN3) ;
\draw[draw=black,solid,line width=0.2mm,->]  (UN2) to[bend left=18] (UN3) ;
\node at (1.5,1.0) {$A_1$};
\node at (1.0,0.6) {$A_2$};
\node at (0.3,-1.3) {$B_1$};
\node at (0.3,-0.6) {$B_2$};
\node at (-0.4,0.6) {$C_1$};
\node at (-0.9,1.0) {$C_2$};
\node at (0.4,3.5) {$X$};
\end{tikzpicture}
\caption{The SPP Chern-Simons-matter quiver.}
\label{SPP_quiver}
\end{figure}

We now consider, as an example, the quiver gauge theory which describes the dynamics of $N$ M2-branes at the suspended pinch point (SPP) singularity (see Fig.\,\ref{SPP_quiver}).
The  Chern-Simons levels are $(2k,-k,-k)$ and the superpotential coupling is given by
\begin{equation}\label{superpotential}
 W = \Tr\left[X \left( A_1 A_2 - C_1 C_2 \right) - A_2 A_1 B_1 B_2 + C_2 C_1 B_2 B_1 \right]\, .
\end{equation}
The marginality condition on the superpotential \eqref{superpotential0} impose  constraints on the chemical potential of the various fields
\begin{align}
 &\Delta_{A}+\Delta_{B}= \pi \, ,\qquad\qquad
 \Delta_{B}+\Delta_{C}= \pi \, ,\qquad\qquad
 2 \Delta_{A}+\Delta_{X}= 2 \pi \, ,
\end{align}
where we have used the symmetry of the quiver to set $\Delta_{A_1} = \Delta_{A_2} = \Delta_{A}$, and so on.
Hence,
\begin{equation}
 \Delta_{B} = \Delta \, , \qquad \qquad \Delta_X = 2 \Delta \, , \qquad \qquad
 \Delta_A = \Delta_C = \pi - \Delta \, ,
\end{equation}
and
\begin{align}\label{flux constraint}
 \fn_{B} = \fn \, , \qquad \qquad \fn_X = 2 \fn \, , \qquad \qquad
 \fn_A = \fn_C = 1 - \fn \, ,
\end{align}
where $\fn_I$ denotes the flavor magnetic flux of the field $I$. We assume $0\leq \Delta \leq 2\pi$ and we enforced condition \eqref{superpotential0}. One can check 
that all other solutions are related to the one we are presenting by a discrete symmetry of the quiver.\footnote{There is a solution for
\begin{align}
 &\Delta_{A}+\Delta_{B}= 3 \pi \, ,\qquad\qquad
 \Delta_{B}+\Delta_{C}= 3 \pi \, ,\qquad\qquad
 2 \Delta_{A}+\Delta_{X}= 4 \pi \, .
\end{align}
which is obtained, using the invariance of $Z$ under $y_I \to 1/y_I$, from \eqref{solution sum 2pi -- init}-\eqref{solution sum 2pi -- end} by performing the substitutions
\be
\mu \to - \mu \, ,\qquad k \to - k \, ,\qquad \Delta \to \pi - \Delta \, ,\qquad Y^{\pm} \to - Y^{\pm} \, .
\ee}

\subsection{The BAEs at large $N$}

The theory under consideration is invariant under
\begin{equation}
 A \leftrightarrow C \, , \qquad \qquad \U(N)^{(2)} \leftrightarrow \U(N)^{(3)} \, .
\end{equation}
Let us assume that the saddle-point solution is also invariant under this $\bZ_2$ symmetry. Thus, we can choose
\begin{equation}
 v_i^{(1)} = v_i \, , \qquad \qquad v_i^{(2)} = v_i^{(3)} = w_i \, .
\end{equation}
Given the rules of Section \ref{large N Bethe potential rules}, the Bethe potential reads
\begin{align}\label{large N Bethe potential}
 \frac{\mathcal{V}}{i N^{3/2}}&
 = 2 k \int dt\, t\, \rho(t)\, \delta v(t)
 + \int dt\, \rho(t)^2\, \Delta \left[ (\pi -\Delta ) (2 \pi -\Delta ) - 2 \delta v^2 \right] \nn \\
 & - \mu \left[\int dt\, \rho(t)-1\right]
 - \frac{2 i}{N^{1/2}}\int dt\, \rho(t)\, \left[\pm \Li_2 \left(e^{i\left[\delta v(t)\pm \left( \pi - \Delta\right) \right]}\right)\right] \, ,
\end{align}
where we defined
\begin{equation}
 \delta v(t) = w(t) - v(t) \, ,
\end{equation}
and we included the subleading terms giving rise to the equation of motion \eqref{tails}.\footnote{Notice that these terms are subleading in the equation of motion for $\rho$,
since $\Li_2$ is finite when its argument is one, while affect the equation of motion for $\delta v$ since $\Li_1$ is singular.}
The eigenvalue density distribution $\rho(t)$, which maximizes the Bethe potential, is a piece-wise function supported on $[t_{\ll}, t_{\gg}]$.
We define the inner interval as
\be
t_< \text{ s.t. } \delta v(t_<) = - \left( \pi - \Delta \right) \;,\qquad\qquad t_> \text{ s.t. } \delta v(t_>) = \pi - \Delta \, .
\ee
Schematically, we have:
\begin{center}
\begin{tikzpicture}[scale=2]
\draw (-2.,0) -- (2.,0);
\draw (-2,-.05) -- (-2, .05); \draw (-0.7,-.05) -- (-0.7, .05); \draw (0.7,-.05) -- (0.7, .05); \draw (2,-.05) -- (2, .05);
\node [below] at (-2,0) {$t_\ll$}; \node [below] at (-2,-.3) {$\rho=0$};
\node [below] at (-0.7,0) {$t_<$}; \node [below] at (-0.7,-.3) {$\delta v = - \left( \pi - \Delta \right)$}; \node [below] at (-0.7,-.6) {$Y^{-} = 0$};
\node [below] at (0.7,0) {$t_>$}; \node [below] at (0.7,-.3) {$\delta v = \pi - \Delta$}; \node [below] at (0.7,-.6) {$Y^{+} = 0$};
\node [below] at (2,0) {$t_\gg$}; \node [below] at (2,-.3) {$\rho=0$};
\end{tikzpicture}
\end{center}
The transition points are at
\be
\label{solution sum 2pi -- init}
t_\ll = -\frac{\mu }{2 k (\pi -\Delta )} \, ,\qquad t_< = - \frac{\mu }{k (2 \pi -\Delta )} \, ,\qquad t_> = \frac{\mu }{k (2 \pi -\Delta )} \, ,\qquad t_\gg = \frac{\mu }{2 k (\pi -\Delta )} \, .
\ee
In the \emph{left tail} we have
\be
\begin{aligned}
\rho & = \frac{1}{2 \Delta^2} \left(\frac{\mu }{\pi -\Delta }+2 k t \right)\, , \qquad\quad
\delta v= - \left( \pi - \Delta \right) \\[.5em]
Y^{-} & = -\frac{ \mu + k (2 \pi - \Delta ) t }{\Delta }
\end{aligned}
\qquad\qquad t_\ll < t < t_< \, .
\ee
In the \emph{inner interval} we have
\be
\begin{aligned}
\rho & = \frac{\mu }{2 (\pi -\Delta ) (2 \pi -\Delta ) \Delta } \, ,\qquad\quad
\delta v = \frac{k (\pi -\Delta ) (2 \pi -\Delta ) t}{\mu }
\end{aligned}
\qquad\qquad t_< < t < t_>
\ee
and $\delta v'>0$. In the \emph{right tail} we have
\be
\begin{aligned}
\rho & = \frac{1}{2 \Delta^2} \left( \frac{\mu }{\pi -\Delta }-2 k t \right) \, ,\qquad\quad
\delta v = \pi - \Delta \\[.5em]
Y^{+} & = -\frac{\mu - k (2 \pi - \Delta ) t}{\Delta }
\end{aligned}
\qquad\qquad t_> < t < t_\gg \, .
\ee
Finally, the normalization fixes
\be
\label{solution sum 2pi -- end}
\mu = 2  k^{1/2} (\pi -\Delta ) (2 \pi -\Delta ) \sqrt{\frac{\Delta}{4 \pi -3 \Delta }} \, .
\ee
$\mu >0$ implies the following inequality
\begin{equation}
 0 < \Delta < \pi \, .
\end{equation}
For $k>1$  there can be discrete $\mathbb{Z}_k$ identifications among the chemical potential which can affect the final result.
We have not been too careful about them here.

\subsection{The index at large $N$}

The rules of the large $N$ twisted index imply that the free energy functional is
\begin{equation}\label{Z large N functional}
  \begin{aligned}
  \quad\rule{0pt}{2em} \re \log Z &= - N^{3/2} \int dt\, \rho(t)^2 \left[ \Delta  (4 \pi -3 \Delta )+\fn \left(3 \Delta^2-6 \pi  \Delta + 2 \pi^2 - 2 \delta v^2 \right) \right] \quad \\
  \rule[-2em]{0pt}{1em} &\quad - N^{3/2} 2 ( 1- \fn) \int_{\delta v \,\approx\, - (\pi - \Delta)} \hspace{-2em} dt\, \rho(t) \, Y^{-}(t)
  - N^{3/2} 2 ( 1- \fn) \int_{\delta v \,\approx\, (\pi - \Delta)} \hspace{-2em} dt\, \rho(t) \, Y^{+}(t) \, .
 \end{aligned}
\end{equation}
We should take the solution to the BAEs, plug it back into the functional \eqref{Z large N functional} and compute the integral.
Doing so, we obtain the following expression for the logarithm of the index:
\begin{align}
 \re \log Z & = -\frac43 \frac{k^{1/2} N^{3/2}}{(4 \pi -3 \Delta )^{3/2} \sqrt{\Delta }} \times \nn \\
 & \left[\Delta \left(7 \Delta^2-18 \pi  \Delta +12 \pi^2\right) +
 \fn \left(-6 \Delta^3+19 \pi \Delta^2-18 \pi^2 \Delta +4 \pi^3\right) \right] \, .
\end{align}
Notice that
\begin{equation}
 {\quad \rule[-1.4em]{0pt}{3.4em}
 \re\log Z = - \frac{2}{\pi} \, \wb{\mathcal{V}}(\Delta) \,
 - \left[ \left(\fn - \frac{\Delta}{\pi}\right) \frac{\partial \wb{\mathcal{V}}(\Delta)}{\partial \Delta}
  \right] \, ,
 \quad}
\end{equation}
where
\begin{equation}
 \wb{\mathcal{V}}(\Delta) = \frac{4}{3} k^{1/2} N^{3/2} (\pi -\Delta ) (2 \pi -\Delta ) \sqrt{\frac{\Delta }{4 \pi -3 \Delta }}  \, ,
\end{equation}
as expected from the index theorem.

\section{Derivation of general rules for theories with $N^{5/3}$ scaling of the index}
\label{general rules N53}

We assume that in the large $N$ limit the eigenvalues corresponding to all the gauge groups are the same to leading order in $N$ and they behave as
\begin{equation}\label{ansatz N53}
 u^{(a)} (t) = N^{\alpha} (i t + v(t))\, ,
\end{equation}
for some $0<\alpha<1$.
We also assume that $\sum_{a=1}^{|G|} k_a \neq 0$ in the following discussion, as appropriate for theories with a massive type IIA behavior.
The long-range force analysis is identical to Appendix \ref{general rules N32}. Here we discuss the $N^{5/3}$ contributions.

\subsection{Bethe potential at large $N$}
\label{Bethe potential rules N53}

Each group $a$ with CS level $k_a$ contributes to the finite $N$ Bethe potential as
\begin{equation}
 \cV^{\text{CS}} = - \frac {k_a}{2} \sum_{i=1}^N \left(u_i^{(a)}\right)^2 \, .
\end{equation}
Using the scaling ansatz \eqref{ansatz N53}, we find
\begin{equation}
 - i k_a N^{2 \alpha + 1} \int dt\, \rho(t)\, t\, v(t) + \frac{k_a}{2} N^{2 \alpha + 1} \int dt\, \rho(t)\, \left(t^2 - v(t)^2\right)\, .
\end{equation}

To obtain the large $N$ behavior of a bi-fundamental field between $\U(N)_a \times \U(N)_b$ we follow the same strategy as in Section \ref{Bethe bi-fundamentals N32}.
For example, consider
\be
\sum_{i<j}^N \Li_2 \left(e^{i\left(u_j^{(b)} - u_i^{(a)} +\Delta_{I}\right)}\right) \, .
\ee
We first write the dilogarithm function as a power series, \ie,
\be
\Li_2(e^{iu}) = \sum_{k=1}^\infty \frac{e^{iku}}{k^2} \, ,
\ee
and then consider the integral
\begin{align}
 I_k = \int_t dt' \rho(t') e^{i\left(u_b(t') - u_a(t) +\Delta_{I}\right)}
 = \int_t e^{-k N^\alpha (t' - t)} \sum_{j=0}^{\infty} \frac{(t'-t)^j}{j!} \partial_x^j \left[\rho(x) e^{i k \left[N^\alpha \left(v(x) - v(t)\right) + \Delta_{I}\right]}\right]_{x=t}\, .
\end{align}
Performing the integral in $t'$ we find
\begin{align}
 \int_t dt' e^{-k N^\alpha (t' - t)} \frac{(t'-t)^j}{j!} = \left(k N^\alpha\right)^{-j-1}\, .
\end{align}
Next, we extract the leading contribution of the left over term, \ie,
\begin{align}
 \partial_x^j \left[\rho(x) e^{i k \left[N^\alpha \left(v(x) - v(t)\right) + \Delta_{I}\right]}\right]_{x=t} & \sim
 \left(i k N^\alpha\right)^j \left[v'(x)^j \rho(x) e^{i k \left[N^\alpha \left(v(x) - v(t)\right) + \Delta_{I}\right]}\right]_{x=t}\nn \\&
 =\left(i k N^\alpha\right)^j v'(t)^j \rho(t) e^{i k \Delta_{I}}\, .
\end{align}
Bringing the pieces together we find
\begin{align}
 I_k = \frac{e^{i k \Delta_{I}}}{k} \rho(t) N^{-\alpha} \sum_{j=0}^{\infty} \left[i v'(t)\right]^j
 = \frac{e^{i k \Delta_{I}}}{k} N^{-\alpha} \frac{\rho(t)}{1- i v'(t)}\, .
\end{align}
Thus,
\begin{align}
 \sum_{i<j}^N \Li_2 \left(e^{i\left(u_j^{(b)} - u_i^{(a)} +\Delta_{I}\right)}\right) =
 N^{2-\alpha} \int dt \Li_3 \left(e^{i \Delta_{I}}\right) \frac{\rho(t)^2}{1 - i v'(t)} \, .
\end{align}
Following the same steps as before, we get
\begin{equation}
 \cV^{\text{bi-fund}} = i g_+\left(\Delta_{I}\right) N^{2 - \alpha} \int dt\, \frac{\rho(t)^2}{1-i v'(t)} \, .
\end{equation}
To have a nontrivial saddle-point, we need $\alpha=1/3$ which ensure that the Chern-Simons terms and the matter
contributions scale with the same power of $N$.

The contribution of (anti-)fundamental fields to the Bethe potential is given by [see  Section \ref{Bethe fundamentals N32}],
\be
 \cV^{\text{(anti-)fund}} = \frac{\left( \tilde n_a - n_a \right)}{4} N^{5/3}\int dt\, \rho(t)\, \sign(t) \left[ i t + v(t) \right]^2\, .
\ee
Notice that, when the {\it total} number of fundamental and anti-fundamental fields in the quiver are equal, this contribution vanishes.

\subsection{The index at large $N$}
\label{entropy rules N53}

The contribution of the Vandermonde determinant to the index can be found using the same techniques presented in Appendix \ref{general rules N32}.
We need to compute
\be
- 2 \sum_{i<j}^N \Li_{1} \left(e^{i \left( u_j^{(a)} - u_i^{(a)}\right)}\right) \, .
\ee
After a small calculation we find,
\be
 \re \log Z^{\text{gauge}} = -\frac{\pi^2}{3} N^{5/3} \int dt\, \frac{\rho(t)^2}{1-i v'(t)}\, .
\ee

We now consider a bi-fundamental field with chemical potential $\Delta_I$ and flavor magnetic flux $\fn_I$.
Using the same methods given in Appendix \ref{general rules N32}, we obtain
\begin{equation}
 \re \log Z^{\text{bi-fund}}_{\text{bulk}} = - (\fn_{I}-1)\, g'_+ \left(\Delta_{I}\right) N^{5/3} \int dt\, \frac{\rho(t)^2}{1-i v'(t)}\, .
\end{equation}

The contribution of (anti-)fundamental fields to the index, at large $N$, is subleading and they just contribute through the Bethe potential.
\section{Polylogarithms}

In this appendix we review the Polylogarithms and their properties which we used in the paper.
The polylogarithm function $\Li_n(z)$ is defined by a power series
\be
\Li_n(z) = \sum_{k=1}^\infty \frac{z^k}{k^n} \, ,
\ee
in the complex plane over the open unit disk, and by analytic continuation outside the disk.
For $z = 1$ the polylogarithm reduces to the Riemann zeta function
\be
\Li_n(1) = \zeta(n)\, , \qquad \text{for} \quad \re n >1 \, .
\ee
The polylogarithm for $n=0$ and $n=1$ is
\be
\Li_0(z) = \frac{z}{1-z}\, , \qquad \qquad  \Li_1(z) = - \log(1-z) \, .
\ee
Notice that $\Li_0(z)$ and $\Li_1(z)$ diverge at $z=1$.
For $n\geq 1$, the functions have a branch point at $z=1$ and we shall take the principal determination with a cut $[1,+\infty)$ along the real axis.
The polylogarithms fulfill the following relations
\be
\partial_u \Li_n(e^{iu}) = i\, \Li_{n-1}(e^{iu}) \;,\qquad\qquad \Li_n(e^{iu}) = i \int_{+i\infty}^u \Li_{n-1}(e^{iu'})\, du' \;.
\ee
The functions $\Li_n(e^{iu})$ are periodic under $u \to u+2\pi$ and have branch cut discontinuities along the vertical line $[0, -i\infty)$ and its images.
For $0< \re u < 2\pi$, polylogarithms satisfy the following inversion formul\ae{}\footnote{The inversion formul\ae{} in the domain $-2\pi < \re u < 0$ are obtaind by sending $u \to -u$.}
\bea
\label{reflection formulae}
\Li_0(e^{iu}) + \Li_0(e^{-iu}) &= -1 \\
\Li_1(e^{iu}) - \Li_1(e^{-iu}) &= -iu + i\pi \\
\Li_2(e^{iu}) + \Li_2(e^{-iu}) &= \frac{u^2}2 - \pi u + \frac{\pi^2}3 \\
\Li_3(e^{iu}) - \Li_3(e^{-iu}) &= \frac i6 u^3 - \frac{i\pi}2 u^2 + \frac{i\pi^2}3 u \, .
\eea
One can find the formul\ae{} in the other regions by periodicity.

%

\providecommand{\href}[2]{#2}\begingroup\raggedright\endgroup

\end{document}